\newtcolorbox{mybox}[1][]{enhanced jigsaw,breakable,pad at break=1mm,
  oversize,left=8mm,interior hidden,colframe=black,nobeforeafter=,#1}
\theoremstyle{definition}
\newtheorem{theorem}{Theorem}[section]
\newtheorem{lemma}[theorem]{Lemma}
\newtheorem{definition}[theorem]{Definition}
\newtheorem{examples}[theorem]{Examples}
\newtheorem{corollary}[theorem]{Corollary}
\newtheorem{proposition}[theorem]{Proposition}
\begin{document}

\author[1]{Bruno Mazorra}
\author[2]{Nicol\'as Della Penna}
\affil[1]{Universitat Pompeu Fabra}
\affil[2]{Amurado Research}

%\author{Bruno Mazorra \and Nicol\'as Della Penna\affil{2}}
%\affil{1}
%\institute{Universitat Pompeu Fabra \and Amurado Research}
%\authorrunning{Mazorra \and Della Penna}
\title{Constant Function Market Making, Social Welfare and Maximal Extractable Value\footnote{Work in progess.}}

\maketitle

\begin{abstract}
We consider the social welfare that can be facilitated by a constant function market maker (CFMM).
When there is sufficient liquidity available to the CFMM, it can approximate the optimal social welfare when all users transactions are executed. 
When one of the agent has the role of proposing the block, and blockspace is scarce, they can obtain higher expected utility than otherwise identical agents. This gives a lower bound on the maximal extractable value exposed when blockspace is scarce.
\end{abstract}

%\tableofcontents
\section{Introduction}

Constant function market makers (CFMMs) are one of the leading application of distributed consensus systems. These markets accurately report prices under the existence of off-chain markets and non-arbitrage condition \cite{angeris2021constant}. In this paper, we explore the properties of the CFMM to aproximate Walrassian Equilibrium prices  when players act sequentially with Walrasian demands. Moreover, we study welfare in a model exchange economy with random trader endowments, similar to that of \cite{frongillo2012interpreting} use to study automated market makers (AMMs) for securities with binary payoffs, motivated by prediction markets. 
We show that if the liquidity available to the CFMM asymptotically increases relative to the wealth of the traders arriving in a single period, 
it can approximate the optimal social welfare when all users transactions are executed. 
When one of the agent plays the role of  proposing a block, and thus has the ability to censor transactions, they can obtain higher utility than otherwise identical agents when blockspace is scarce relative to transactions. 
\section{Preliminaries}
\subsection{Welfare in Walrasian equilibrium}\label{one-shoot}
Assume that all players share the same concave utility function $U:\mathbb R^l_+\rightarrow \mathbb R$. Each player have a vector of endowments $\Delta\in\mathbb R^l_+$ drawn from a distribution $\mathcal D$. All players are \textit{utility maximizing price takers}, thus given the  price of endowments is $p$, a player with endowment $\Delta$ solves:
\begin{maxi*}|s|
{\Delta'}{U(\Delta')}
{}{}
\addConstraint{p\cdot \Delta=p\cdot \Delta'.}{}
%\addConstraint{df}{}
\end{maxi*}
We denote by $\zeta(\Delta,p)$ the $\Delta'$ that optimizes the problem and by $z(\Delta,p):=\zeta(\Delta,p)-\Delta$. Then the \textbf{Walrasian equilibrium price} is defined as a price vector $p^\star$ such that
\begin{equation*}
    \mathbb E_{\Delta\sim \mathcal D}[z(\Delta,p^\star)]=0.
\end{equation*}
By Arrow-Debreu \cite{mas1995microeconomic}, we have that if $U$ is strictly increasing, convex and the support of the endowments is bounded, then the Walrasian equilibrium exist.
In this paper, we will constrain to the exchange economies with a unique Walrasian equilibrium. For a given set of endowments and a utility function $U$, the \textbf{social welfare} in this game is defined as $Wf(\mathcal D,U):=\mathbb E_{\Delta\sim\mathcal D}[\zeta(\Delta,p^\star)]$.

Now, assume that the number of player instances is a finite number $n$. In this case, in order to find the Pareto optimal allocation, we will do the following.
Consider a case with traders $\mathcal I = \{1,...,n\}$ and assets $\textbf T=\{1,...,l\}$. 
Each player, has a preference over the assets modeled by a utility function $U:\mathbb R^l_+\rightarrow\mathbb R$ and is endowed with a non-negative vector of $l$ goods $\Delta^i=(\Delta^i_1,...,\Delta^i_l)$ drawn from the distribution $\mathcal D$.
Altogether, $\mathcal E=(\mathcal I,\textbf T,U,\Delta=\{\Delta^i\}_i)$ defines an exchange economy.

Without assuming inventory or storage (such as that of an AMM), the first requirement is that the assignment of goods to individuals not exceed the amounts available. We define the allocation of assets as a vector $x=(x^1,...,x^n)$ 
where $x^i=(x^i_1,...,x^i_l)$ denotes trader $i'$s bundle according to the allocation. The set of feasible allocation is defined as 
\begin{equation*}
    F(\Delta)= \{(x^1,...,x^n):\sum_{i\in\mathcal I} x^i=\sum_{i\in\mathcal I} \Delta^i \}
\end{equation*}
and it contains all allocations of goods across individuals that, in total, exhaust the available amount of every good. 
Under sufficient good conditions, there exist a Walrasian equilibrium, and we can compute the Walrasian equilibrium allocations (WEA).

Now let $\Delta_1,....,\Delta_n$ be i.i.d with distribution $\mathcal D$. For each instance of $\Delta =(\Delta_1,...,\Delta_n)$, we can compute the Walrasian equilibrium price $p$. 
And so, the Walrasian equilibrium price $p=p(\Delta_1,...,\Delta)$ follows a distribution $X_n$ with $\text{supp } X_n\subseteq \{x\in\mathbb R^l_+:\sum_{i=1}^l x_i=1\}$. In this paper, we will assume that for a pure exchange economy with $n$ players and endowment distribution $\mathcal D$ with bounded support, we have that $\mathbb E[X_n]=p^\star$ and
$X_n\underset{n\rightarrow+\infty}{\longrightarrow} p^\star$ almost surely.

%\begin{proof}
%Nikete Reference
%\end{proof}
\subsection{Constant function market makers}

A constant function market maker (CFMM,see \cite{angeris2020improved}) consist of a function $C:\mathbb R^l_+\rightarrow \mathbb R$, its reserves $ R\in\mathbb R^l_+$ and transaction fees parameter $\gamma$. The element $R_i$ is the amount of $i$ assets available on the CFMM contract, while the function $C$ specifies the behavior of the contract. More specifically, if an agent wants to trade with the CFMM a vector of assets $\Lambda\in\mathbb R^l$ if
\begin{equation*}
    C(R+(1-\gamma)\Lambda)\geq C(R).
\end{equation*}
We say that the CFMM has no fees if $\gamma=0$. In this paper, we will assume fee-less CFMM except stated otherwise. The agents that add and remove liquidity on the CFMM are called \textit{liquidity providers}. When adding and removing some reserves by these players, the marginal price of the CFMM can not change, that is, $\nabla C(R')=\nabla C(R)$, see \cite{angeris2021constant}.
\begin{examples} In the following, we provide an incomplete list of examples of CFMM:
\begin{itemize}
    \item Uniswap V2 DEX has the CFMM defined as $C(x,y)=xy$. If the tuple of initial reserves is $R=(R_0,R_1)$ and a trader wants to exchange some amount $\Delta$ of tokens $X$, she obtains $g(\Delta)=-\frac{R_0R_1}{R_0+\Delta}+R_1$.
    \item Constant sum market makers $C(x)=\textbf{c}^t\cdot x$, for some $c\in\mathbb R^l_+$. 
    \item Constant geometric mean market maker $C(x)=\prod_{i=1}^lx_i^{w_i}$, where $w>0$ and $\textbf{1}^t\cdot w=1$.
    \item Constant min market maker $C(x)=\min\{x_1,...,x_l\}$. Observe that if a player with strictly increasing utility function $U$ trades with this market maker will reach a state with reserves of the form $(R_i,...,R_i)$ for some $i$.
    \item Quadratic-over-linear constant market maker $C(x,y)=-x^2/y$ defined in $\{(x,y):y>0\}$.
    \item The Minecraft modification package market maker $C(x,y)=xe^y$.
\end{itemize}
\end{examples}
In this paper, we will not assume that players have a specific utility function. Therefore, we will need to state the \textbf{general trade choice problem}, similar to the problem stated in \cite{angeris2021constant}. We have a player with strictly increasing and concave utility function $U$ with initial endowments $\Delta$. Then, a utility maximizing players that trades with a CFMM with constant function $C$ and reserves $R$ solves the problem
\begin{equation*}
\begin{aligned}
& \underset{\Lambda}{\text{maximize}} & & U(\Lambda) \\
& \text{subject to} & & C(R+\Delta-\Lambda)\geq C(R),\Lambda\geq0
\end{aligned}
\end{equation*}
This problem is a convex problem, thus we can globally and efficiently solve this problem \cite{angeris2021constant}. It is easy to show what the solution will satisfy $C(R+\Delta-\Lambda)=C(R)$. For a player with utility function $U$ and endowments $\Delta$ and a CFMM with function $C$ and reserves $R$, we denote the solution of the optimization problem as $\zeta(U,\Delta,C,R)$ or $\zeta(\Delta,R)$ if $U$ and $C$ are clearly specified. Observe that the solution of the optimization problem is not necesrly unique and therefore $\zeta(\Delta,R)$ is not specified. However, we will see that in some conditions, choosing a solution for a specific $\Delta_0,R_0$ the function $\zeta$ can be extended smoothly.

\subsection{Maximal extractable value}

Maximal (also miner) extractable value, or MEV, usually refers to the value that \textit{privileged} players can extract by strategically ordering, censoring, and placing transactions in a blockchain. In our context, this privileged players will be builders or Walrasian auctioneers, i.e. players responsible for constructing the new block or more generally responsible for allocating the goods an input of endowments and preferences revealed. 

A formal definition of MEV can be seen in \cite{babel2021clockwork,mazorra2022price}. In this paper, we will define the maximal extractable value as follows.
\begin{definition} Let $P$ be a builder with utility function $U_P$ and endowments $\Delta$. Let $\mathcal M$ be an allocation mechanism and $\mathcal T$ the set of transactions received by the builder. Then, we define the Walrasian MEV as
\begin{align*}
& \underset{\mathcal T'}{\text{maximize }}   U_P(\mathcal M(\mathcal T')_P) \\
& \text{subject to } \mathcal T'\subseteq \mathcal T
\end{align*}

\end{definition}

\section{Constant function market makers with Walrasian demand}
In this section, we study the social welfare when players interact with constant function market makers sequentially. We assume that there exist an exogenous liquidity provider with initial liquidity $R_0$ provided to a constant function market maker with convex function $C:\mathbb R^l_+\rightarrow \mathbb R$. All players share the same utility function $U$ and the endowments are drawn from a distribution $\mathcal D$. Except stated otherwise, we will assume that $C$ and $U$ are smooth maps.

\subsection{Model}

In each time period $t$, an agent with endowment $\Delta_t\sim \mathcal D$ and convex, strictly increasing utility function $U:\mathbb R^l_+\rightarrow \mathbb R$ trades in the constant function market maker. After a feasible trade $\Lambda_t$, the liquidity is updated to $R_{t+1}=R_t+\Lambda_t'$. Assuming that all players are \textit{utility maximizing}, we have that, at time $t$ the player solves the general trade choice problem computing $\zeta(\Delta_t,R_t)$.
For a specific realization of endowments $\Delta_1,...,\Delta_t$, we denote $\Delta(t)=(\Delta_1,...,\Delta_t)$. 

To define the notion of social welfare and optimal welfare in equilibrium, one need to define a notion of the equilibrium on the game. One could be tempted to define a generalization of the notion of Walrasian equilibrium provided in section \ref{one-shoot}. Equivalently, one could say that a reserve $R^\star$ in the feasible set of a CFMM with curve $C$ is a Walrasian equilibrium for CFMM if 
\begin{equation}\label{eq:social_bad}
    \mathbb E_\Delta [z(\Delta,R^\star,C)]=0
\end{equation}
However, in general, there is no solution $R^\star$ for the equation \ref{eq:social_bad}, see \ref{appendix} for counterexample. So,
to counterfactual compare different social welfare CFMM, we define the following exchange economy and notion of social welfare.
\begin{definition} A CFMM with Walrasian demands (CFMMWD) consists of a tuple $G=(R,C,U,\mathcal D)$. As previous described $G$ induces a stochastic process $\{R_t\}_t$ in the topological space $\{R'\in\mathbb R^l_+:C(R')=C(R)\}$. 
We say that a game $\mathcal G=(R,C,U,\mathcal D)$ is non-subsidizing if $p^\star=p_a$. In general, this equilibrium is not computable in polynomial time.
\end{definition}
\textbf{Observation}: The stochastic process $\{R_k\}_k$ is in fact a Markov process with state space
$\mathcal S=\{R\in\mathbb R_{\geq0}^l: C(R)=C(R_0)\}$ since a strategic player optimization problem just depends on its utility function, the current state and hers endowments. 
\begin{proposition}\label{prop:manifold} If $C:\mathbb R^l_+\rightarrow\mathbb R$ is a smooth map with boundaries, then $\mathcal S$ is a manifold with boundaries. If $C$ vanishes in the boundaries, then $\mathcal S$ is a manifold. %Moreover, we can give structure of Riemann manifold to $\mathcal S$ associating to each point $p\in\mathcal S$ to $\nabla^2 C(p)$.
\end{proposition}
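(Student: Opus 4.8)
The plan is to recognize this as a standard application of the regular value / regular level set theorem from differential topology. The set $\mathcal{S} = \{R \in \mathbb{R}^l_{\geq 0} : C(R) = C(R_0)\}$ is a level set of the smooth function $C$, and the classical result is that a level set $C^{-1}(c)$ is a smooth manifold (of dimension $l-1$) whenever $c$ is a regular value of $C$, i.e. whenever $\nabla C(R) \neq 0$ at every point $R$ with $C(R) = c$. So the first step is to reduce the claim to verifying that $\nabla C$ does not vanish on $\mathcal{S}$. Since $C$ is assumed to be a CFMM curve, it is (strictly) increasing, which forces $\nabla C(R)$ to have strictly positive — in particular nonzero — components on the interior; this is exactly the hypothesis that makes $C(R_0)$ a regular value. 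I would state the ``manifold with boundary'' version of the regular value theorem (see e.g.\ Lee, \emph{Introduction to Smooth Manifolds}): if $M$ is a smooth manifold with boundary and $c$ is a regular value of $C\colon M \to \mathbb{R}$ \emph{and} of its restriction $C|_{\partial M}$, then $C^{-1}(c)$ is a smooth manifold with boundary, whose boundary is $C^{-1}(c) \cap \partial M$.

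The substantive step is to handle the boundary of the ambient domain. Here $M = \mathbb{R}^l_{\geq 0}$ is itself a manifold with corners rather than a smooth manifold with boundary, so I would first clarify this technicality — either by treating the phrase ``smooth map with boundaries'' as asserting that $C$ extends smoothly to a neighborhood of $\mathbb{R}^l_{\geq 0}$ in $\mathbb{R}^l$, or by restricting attention to the manifold-with-boundary structure induced on each face. The intersection of $\mathcal{S}$ with the coordinate hyperplanes $\{R_i = 0\}$ is what produces boundary points of $\mathcal{S}$, and one must check that the regular value condition also holds for $C$ restricted to each such face. Applying the theorem then yields that $\mathcal{S}$ is a manifold with boundary, with $\partial \mathcal{S} = \mathcal{S} \cap \partial \mathbb{R}^l_{\geq 0}$.

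For the second assertion, the hypothesis is that $C$ vanishes on the boundary of $\mathbb{R}^l_{\geq 0}$, i.e.\ $C(R) = 0$ whenever some coordinate $R_i = 0$. The idea is that, provided $C(R_0) \neq 0$ (which holds for the curves of interest, where $C$ is strictly positive in the interior), the level $C(R_0)$ is simply not attained on $\partial \mathbb{R}^l_{\geq 0}$. Hence $\mathcal{S}$ is disjoint from the boundary of the ambient domain, so $\partial \mathcal{S} = \emptyset$ and $\mathcal{S}$ is a genuine (boundaryless) smooth manifold. Concretely, $\mathcal{S} = C^{-1}(C(R_0)) \cap \operatorname{int}(\mathbb{R}^l_{\geq 0})$, and on the open interior the plain regular value theorem applies directly.

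The main obstacle I anticipate is not the topology but making the regularity hypotheses precise enough to invoke the theorem cleanly: one needs $C(R_0)$ to be a regular value, which requires ruling out $\nabla C = 0$ on $\mathcal{S}$, and one needs to handle the corner structure of $\mathbb{R}^l_{\geq 0}$ carefully (the regular-value-with-boundary theorem is stated for manifolds with boundary, not corners). I would address this by noting that strict monotonicity of $C$ gives $\nabla C \neq 0$ throughout the relevant region, and by arguing face-by-face (or by smooth extension to an open neighborhood) so that the corner issue is reduced to finitely many applications of the boundary version of the theorem. The second statement, by contrast, is the cleaner case precisely because the vanishing hypothesis removes the boundary from consideration entirely.
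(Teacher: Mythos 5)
Your proposal is correct and follows essentially the same route as the paper's own proof: observe that $\nabla C \neq 0$ on the interior so that $C(R_0)$ is a regular value, then invoke the implicit function / regular value theorem to conclude $\mathcal{S}$ is an $(l-1)$-dimensional manifold. The paper's argument is in fact terser than yours — it does not spell out the corner structure of $\mathbb{R}^l_{\geq 0}$ or the disjointness-from-boundary argument for the second assertion — so your extra care on those points only fills in details the paper leaves implicit.
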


\begin{proposition}\label{prop:smooth} Let $C$ and $U$ are smooth and strictly concave functions. Then for a given solution on a initial tuple $(\Delta_0,R_0)$ the function $\zeta(\cdot,\cdot)$ can be extended smoothly on all the domain.
\end{proposition}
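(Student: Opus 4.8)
The plan is to express $\zeta$ through the first-order optimality conditions of the general trade choice problem and apply the implicit function theorem, using strict concavity to control the relevant Jacobian. Fix the base point $(\Delta_0,R_0)$ with optimal trade $\Lambda_0=\zeta(\Delta_0,R_0)$, and suppose (as is generic) that $\Lambda_0$ lies in the interior of the nonnegative orthant, so that the only active constraint is the CFMM invariant. Writing $y=R+\Delta-\Lambda$ for the post-trade reserves and using that the invariant binds at optimum, the KKT conditions read
\begin{equation*}
F(\Lambda,\mu;\Delta,R):=\begin{pmatrix}\nabla U(\Lambda)-\mu\,\nabla C(y)\\ C(y)-C(R)\end{pmatrix}=0,
\end{equation*}
where $\mu$ is the Lagrange multiplier. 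Because $U$ and $C$ are strictly increasing we have $\nabla U(\Lambda_0)>0$ and $\nabla C(y_0)>0$, which forces $\mu_0>0$; thus $(\Lambda_0,\mu_0)$ is a nondegenerate solution of $F=0$.

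Next I would differentiate $F$ in the primal--dual variables $(\Lambda,\mu)$. Since $\partial y/\partial\Lambda=-I$, the Jacobian is the bordered (KKT) matrix
\begin{equation*}
J=\begin{pmatrix}\nabla^2U(\Lambda)+\mu\,\nabla^2C(y) & -\nabla C(y)\\ -\nabla C(y)^{\top} & 0\end{pmatrix}.
\end{equation*}
The top-left block $A:=\nabla^2U+\mu\,\nabla^2C$ is a sum of a negative definite matrix (strict concavity of $U$) and a positive multiple ($\mu>0$) of a negative definite matrix (strict concavity of $C$), hence $A\prec0$ and is invertible. Its Schur complement relative to the zero corner is $-\nabla C(y)^{\top}A^{-1}\nabla C(y)$, which is strictly positive because $A^{-1}\prec0$ and $\nabla C(y)\neq0$. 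Therefore $\det J=\det(A)\cdot\bigl(-\nabla C(y)^{\top}A^{-1}\nabla C(y)\bigr)\neq0$, so $J$ is nonsingular at the base point.

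The implicit function theorem then yields a smooth map $(\Delta,R)\mapsto(\Lambda(\Delta,R),\mu(\Delta,R))$ solving $F=0$ on a neighborhood of $(\Delta_0,R_0)$, with $\Lambda(\Delta_0,R_0)=\Lambda_0$; its first component is the desired smooth local branch of $\zeta$. To pass from local to global I would invoke uniqueness: strict concavity of $U$ over the convex feasible set (convex because concavity of $C$ makes the superlevel set $\{R':C(R')\ge C(R)\}$ convex) guarantees the maximizer is unique, so the local branch must agree with $\zeta$ wherever both are defined. Since the nonsingularity of $J$ established above holds at \emph{every} interior point with a binding invariant, not merely at the base point, the branch can be continued, and the overlapping smooth pieces glue into a single smooth extension of $\zeta$ on the region of interior solutions.

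The main obstacle I anticipate is the behavior at the boundary of the domain, where one of the constraints $\Lambda_j\ge0$ becomes active or the multiplier degenerates ($\mu\to0$): there the active set changes and the particular system $F=0$ above no longer describes the solution, so smoothness across such strata is not automatic. Making the gluing argument fully rigorous therefore requires either restricting to the stratum where the active set is constant and strict complementarity holds, or treating the transitions between strata separately; verifying that $\zeta$ stays single-valued and that the several local branches are genuinely the same function is the delicate point, whereas the invertibility of the bordered Hessian is the routine computational core.
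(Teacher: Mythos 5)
Your proposal follows essentially the same route as the paper: the paper's (very terse) proof likewise invokes the Lagrangian and the implicit function theorem to get local smoothness of $\zeta$, and then appeals to existence of the maximizer and continuity for the global extension. Your write-up in fact supplies the details the paper omits --- the explicit bordered KKT matrix, its nonsingularity via the Schur complement, and the gluing of local branches via uniqueness of the maximizer --- and the caveat you flag about active-set changes at the boundary (where $\Lambda_j \geq 0$ binds or $\mu \to 0$) is a genuine gap that the paper's own sketch shares but does not acknowledge.
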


\begin{definition}
A CFMMWD $G$ is \textbf{complete} if the map $\varphi_G:S\rightarrow\Delta^n$ defined as $R\mapsto \nabla C(R)/||\nabla C(R)||_1$ is differentiable, injective, image $\text{int}(\Delta^l)$ and with differential inverse. So, we can define the updating price stochastic process as
\begin{equation}
    p_{k+1}=\varphi(\varphi^{-1}(p_{k})+\zeta(\cdot,C,\varphi^{-1}(p_k))).
\end{equation}
we denote by $p^\star_s\in\Delta^l$ the \textbf{stochastic equilibrium price} defined as a solution of
\begin{equation}
    \mathbb E_{\Delta\sim\mathcal D}[\varphi\left(\varphi^{-1}(p^\star_s)+\zeta(\Delta,C,\varphi^{-1}(p^\star_s))\right)]=p_s^\star.
\end{equation}
\end{definition} 
\textbf{Counterexample}: Observe that in general $p^\star_s$ does not exist. For example, consider the CFMM game with $U(x,y)=C(x,y)=xy$ and $\mathcal D$ is given by $\Pr[(\Delta_1,\Delta_2)=(1,0)]=1$. Then, clearly we have that $R_k\rightarrow (0,1)$ as $k\rightarrow+\infty$. Since the image of $\varphi_G$ is in $(0,1)$, we have that $p^\star_s$ does not exist.
\begin{theorem}\label{theorem:existence} Let $G$ be a  complete CFMMWD with $p^\star>0$ and $\mathcal D$ with bounded support, then $p^\star_s$ exists.
\end{theorem}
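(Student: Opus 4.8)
The plan is to realize $p^\star_s$ as a fixed point of the self-map
\[
F(p) := \mathbb E_{\Delta\sim\mathcal D}\bigl[\varphi\bigl(\varphi^{-1}(p)+\zeta(\Delta,C,\varphi^{-1}(p))\bigr)\bigr],
\]
and to obtain this fixed point from Brouwer's theorem applied on a carefully chosen compact convex subset of $\mathrm{int}(\Delta^l)$. First I would check that $F$ is a well-defined continuous self-map of the open simplex. Completeness of $G$ gives that $\varphi$ and $\varphi^{-1}$ are smooth, Proposition \ref{prop:smooth} gives that $\Delta\mapsto\zeta(\Delta,C,\varphi^{-1}(p))$ is smooth, and since $\mathcal D$ has bounded support the integrand is a continuous function of $(\Delta,p)$ over a compact range, so dominated convergence yields continuity of $F$ in $p$. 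Moreover, for each fixed $p$ the integrand takes values in $\varphi(S)=\mathrm{int}(\Delta^l)$ and, because $\mathrm{supp}\,\mathcal D$ is compact, in fact ranges over a compact subset of the open simplex; hence its mean $F(p)$ again lies in $\mathrm{int}(\Delta^l)$.

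The real issue is that $\mathrm{int}(\Delta^l)$ is not compact, and the counterexample above shows that without a hypothesis like $p^\star>0$ the iteration can drift to the boundary and no fixed point exists. So the heart of the argument is to produce $\varepsilon>0$ for which the compact convex set $K_\varepsilon:=\{p\in\Delta^l: p_i\ge\varepsilon \text{ for all } i\}$ is mapped into itself by $F$. I would prove this by establishing an inward-pointing drift condition near each boundary face $\{p_i=0\}$: when the marginal price $p_i$ of good $i$ is small, utility-maximizing traders buy good $i$ from the pool, depleting its reserve and pushing its marginal price up, so that in expectation the $i$-th coordinate of the update strictly exceeds $p_i$. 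Here $p^\star>0$ enters decisively: since the unique one-shot Walrasian equilibrium is interior, $\mathbb E_\Delta[z(\Delta,p^\star)]=0$, and by continuity and monotonicity the expected net purchase of good $i$ is strictly positive once $p_i$ is close enough to $0$, while boundedness of $\mathrm{supp}\,\mathcal D$ bounds the trade sizes and thus prevents the update from overshooting across the simplex. Translating this sign condition through $\varphi$ should yield $F_i(p)>p_i$ on a neighborhood of $\{p_i=0\}$ for each $i$, hence $F(K_\varepsilon)\subseteq K_\varepsilon$ for $\varepsilon$ small.

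With the invariant set in hand, $F|_{K_\varepsilon}:K_\varepsilon\to K_\varepsilon$ is a continuous self-map of a nonempty compact convex set, so Brouwer's fixed point theorem furnishes $p^\star_s\in K_\varepsilon\subseteq\mathrm{int}(\Delta^l)$ with $F(p^\star_s)=p^\star_s$, which is exactly the defining equation of the stochastic equilibrium price.

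I expect the main obstacle to be the rigorous form of the boundary drift in the second paragraph: controlling the degeneration of the reserves $\varphi^{-1}(p)$ as $p\to\partial\Delta^l$ and pushing the strict sign of the expected excess demand through the nonlinear, possibly ill-conditioned map $\varphi$. If a clean coordinatewise inequality proves awkward, an alternative is to avoid constructing an explicit invariant set and instead extend the excess-demand formulation to the closed simplex and invoke a Gale--Nikaido--Debreu or Poincar\'e--Miranda argument, using $p^\star>0$ only to supply the boundary behavior; but the Brouwer-on-$K_\varepsilon$ route above seems the most transparent.
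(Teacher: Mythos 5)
Your proposal follows essentially the same route as the paper's own proof: both realize $p^\star_s$ as a Brouwer fixed point of the map $F(p)=\mathbb E_{\Delta\sim\mathcal D}[\varphi(\varphi^{-1}(p)+\zeta(\Delta,C,\varphi^{-1}(p)))]$, with continuity obtained from dominated convergence and the key step being the existence of an $\varepsilon$-truncated simplex $K_\varepsilon$ (the paper's $N_\varepsilon$) invariant under $F$, which both arguments derive from $p^\star>0$, completeness, and the bounded support of $\mathcal D$. The only difference is in presentation of that step --- the paper argues invariance by contradiction along sequences approaching $\partial\Delta^l$, while you propose a direct inward-drift inequality $F_i(p)>p_i$ near each face --- but these amount to the same boundary estimate, so your approach matches the paper's.
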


\begin{proposition}\label{prop:convergence} Let  $\mathcal D$ be distribution of endowments, utility function $U$ with unique Walrasian equilibrium $p^\star$. Let $C:\mathbb R^l\rightarrow \mathbb R$ be convex and differentiable function and $R\in \mathbb R^l$ such that $\nabla C(R)=p^\star$. Then, it holds that the sequence of stochastic prices $\{p^\star_s(\lambda)\}_{\lambda\in\mathbb R_{\geq0}}$ of the CFMMWD $G^\lambda=(\lambda R,C,U,\mathcal D)$ converge to $p^\star$.
\end{proposition}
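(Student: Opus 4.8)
The plan is to show that every subsequential limit of $\{p_s^\star(\lambda)\}$ as $\lambda\to\infty$ must be a Walrasian equilibrium price and then to invoke the assumed uniqueness of $p^\star$. Two phenomena drive the argument: a \emph{price-taking limit} (when reserves are huge a trader with bounded endowment faces an essentially linear budget set at the marginal price, so her CFMM net trade converges to her Walrasian net demand) and \emph{vanishing price impact} (a bounded trade against huge reserves barely moves the marginal price). Write $R_p^\lambda:=\varphi_\lambda^{-1}(p)$ for the reserve of $G^\lambda$ whose marginal price is $p$; by completeness this is well defined and smooth in $p$ on $\mathrm{int}(\Delta^l)=\mathrm{image}(\varphi_\lambda)$, and the hypothesis $\nabla C(R)=p^\star$ keeps the target $p^\star$ inside this image for every $\lambda$. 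The equilibrium equation for $p=p_s^\star(\lambda)$ reads $\mathbb E_\Delta[\varphi(R_p^\lambda+z(\Delta,R_p^\lambda,C))]=p$, and since $\varphi(R_p^\lambda)=p$ this is $\mathbb E_\Delta[\varphi(R_p^\lambda+z^\lambda)-\varphi(R_p^\lambda)]=0$ with $z^\lambda:=z(\Delta,R_p^\lambda,C)$.

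First I would establish the price-taking limit. Fix $p$ in a compact subset of $\mathrm{int}(\Delta^l)$ and $\Delta$ in the bounded support of $\mathcal D$. At the optimum the constraint $C(R_p^\lambda+\Delta-\Lambda)\ge C(R_p^\lambda)$ binds. Because $C$ is convex with $\nabla C(R_p^\lambda)\parallel p$, the marginal exchange rate offered by the pool over any bounded trade converges to $p$ as $R_p^\lambda\to\infty$, so the feasible consumption set converges to the price-taker budget set $\{\Lambda\ge 0:p\cdot\Lambda\le p\cdot\Delta\}$; by strict monotonicity and concavity of $U$ the maximiser converges to the Walrasian demand $\zeta(\Delta,p)$, whence $z^\lambda\to z(\Delta,p)$. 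Moreover convexity gives $p\cdot z^\lambda\le 0$ with $p\cdot z^\lambda\to 0$ (asymptotic Walras' law), so up to a vanishing radial term $z^\lambda$ lies in the tangent space $p^\perp=\{v:p\cdot v=0\}$.

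Next I would extract the equilibrium condition by a first-order expansion. Taylor's theorem gives $\varphi(R_p^\lambda+z^\lambda)-\varphi(R_p^\lambda)=D\varphi(R_p^\lambda)z^\lambda+e_\lambda(\Delta)$ with integral remainder obeying $\|e_\lambda(\Delta)\|\le\tfrac12\sup\|D^2\varphi\|\,\|z^\lambda\|^2$ over the bounded segment $[R_p^\lambda,R_p^\lambda+z^\lambda]$. Taking expectations in the equilibrium equation yields $D\varphi(R_p^\lambda)\,\mathbb E[z^\lambda]=-\mathbb E[e_\lambda]$. Now $\varphi_G:S\to\mathrm{int}(\Delta^l)$ is a diffeomorphism, so its differential restricted to $T_{R_p^\lambda}S=p^\perp$ is injective with least singular value $\sigma_\lambda>0$; combining this with the bound on the remainder and the fact that $z^\lambda$ is asymptotically tangent gives $\sigma_\lambda\,\|\mathbb E[z^\lambda]\|\le\tfrac12\sup\|D^2\varphi(R_p^\lambda)\|\,\mathbb E[\|z^\lambda\|^2]+o(1)$. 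Since endowments are bounded, $\mathbb E[\|z^\lambda\|^2]$ is bounded, so $\mathbb E[z^\lambda]\to 0$ provided $\|D^2\varphi(R_p^\lambda)\|/\sigma_\lambda\to 0$.

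The hard part will be exactly this ratio estimate: as reserves blow up, the curvature of the price map must decay strictly faster than its leading tangential sensitivity. I would verify it from the scale behaviour of $\varphi=\nabla C/\|\nabla C\|_1$, for which $\sigma_\lambda\asymp\lambda^{-1}$ and $\|D^2\varphi\|\asymp\lambda^{-2}$, so the ratio is $\asymp\lambda^{-1}\to 0$; for a homogeneous $C$ (e.g. the geometric-mean or constant-sum curves) it is exact, since $\varphi$ is scale-invariant, $R_p^\lambda=\lambda R_p^1$, and $D\varphi$ is homogeneous of degree $-1$, so $\lambda D\varphi(R_p^\lambda)=D\varphi(R_p^1)$ is constant in $\lambda$. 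Finally I would assemble the pieces: by Theorem \ref{theorem:existence} each $p_s^\star(\lambda)$ exists and lies in the compact simplex, so the family has a subsequential limit $\bar p$, which the hypotheses $p^\star>0$ and bounded support keep in the interior away from the boundary where $\varphi_\lambda^{-1}$ degenerates. Passing to the limit along the subsequence, using $z^\lambda\to z(\Delta,\bar p)$ and $\mathbb E[z^\lambda]\to 0$, gives $\mathbb E_\Delta[z(\Delta,\bar p)]=0$; by uniqueness of the Walrasian equilibrium this forces $\bar p=p^\star$. As every subsequential limit equals $p^\star$, the whole family $\{p_s^\star(\lambda)\}$ converges to $p^\star$.
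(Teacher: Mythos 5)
Your proposal follows the same skeleton as the paper's proof: show that the limit of $p^\star_s(\lambda)$ satisfies the Walrasian condition $\mathbb E_\Delta[z(\Delta,\bar p)]=0$ via the price-taking limit (net trades against huge reserves converge to Walrasian net demands), then invoke uniqueness of $p^\star$. However, you supply two steps that the paper's argument skips, and they are the genuinely hard ones. First, the paper's chain ends with $\lim_{\lambda\to+\infty}\mathbb E[z(\Delta,C,p^\star_s(\lambda))]=0$ asserted without justification; the definition of $p^\star_s(\lambda)$ only says that the \emph{expected price change} is zero, not that the expected net trade is. Your Taylor expansion of $\varphi$ together with the singular-value/curvature ratio estimate $\|D^2\varphi(R_p^\lambda)\|/\sigma_\lambda\to0$ is exactly the missing bridge from the price fixed-point equation to vanishing expected excess demand, and you correctly identify it as the crux. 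Second, the paper tacitly assumes that $\lim_{\lambda\to+\infty}p^\star_s(\lambda)$ exists, whereas your compactness/subsequential-limit argument removes that assumption. Two caveats, which apply to the paper at least as much as to you: the scaling claims (the paper's $\nabla C(\lambda R)/\|\nabla C(\lambda R)\|_1=p^\star$, your $\sigma_\lambda\asymp\lambda^{-1}$ and $\|D^2\varphi\|\asymp\lambda^{-2}$) do not follow from convexity and differentiability alone and really require homogeneity, or some ray-wise scaling hypothesis, on $C$ --- you at least flag this and verify it for homogeneous curves, while the paper uses it silently; and your assertion that subsequential limits stay in $\mathrm{int}(\Delta^l)$ needs a uniform-in-$\lambda$ version of the $\varepsilon$-interiority bound from the proof of Theorem \ref{theorem:existence}, which neither you nor the paper establishes. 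Net assessment: same route, but your version is strictly more complete than the one in the appendix.
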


\subsection{Welfare and average price}
We define the notion of social welfare in CFMM with Walrasian demands. We will prove that, locally, the CFMM that provides more social welfare is the constant sum market maker with price $p^\star$. However, globally, we will prove that this is in general not true, leaving as an open question which function $C$ optimizes the social welfare for a given set of players with endowments $\mathcal D$ and utility function $U$.
\begin{definition}
For a CFMMWD, we define the \textbf{average social welfare} as
\begin{equation*}
    \text{WF}(G)=\lim_{t\rightarrow+\infty}\frac{\mathbb E_{\Delta(t)\sim\mathcal D^t}[\sum_{k=1}^tU(\zeta(R_k,\Delta_k,C))]}{t}.
\end{equation*}
We define the \textbf{average quilibrium price} as
\begin{equation*}
    p_a=\lim_{t\rightarrow+\infty}\frac{\mathbb E_{\Delta(t)\sim\mathcal D^t}[\sum_{k=1}^t\nabla C(R_k)/||\nabla C(R_k)||]}{t}.
\end{equation*}
\end{definition}

In general $Wf(G)$ does not need to exist, however there are sufficiently good conditions where this term is well-defined. Before doing that, we prove that locally the bests CFMM in terms of social welfare (without subsidizing) is the constant sum market maker with price $p^\star$. 
\begin{proposition}\label{prop:localWf} Let $R$ be an initial liquidity, $U$ a concave utility function and $\Delta\in\mathbb R^l_+$. 
Let $\text{Co}(\mathbb R^l_+,\mathbb R_+)$ be the set of all smooth convex functions from $\mathbb R^l_+$ to $\mathbb R_+$. Let $p^\star$ be the WEP of the one-shoot model with utility function $U$ and distribution $\mathcal D$. Assume that $\text{supp } z(\cdot,p^\star)\subseteq [0,R]^l$. Then,
\begin{equation*}
    \underset{s.t. \nabla C(R)=p^\star}{\text{max}_{C\in\text{Co}(\mathbb R^l_+,\mathbb R_+)}} U(\zeta(\Delta,R,C))=U(\zeta(\Delta,p^\star))
\end{equation*}
and the curve that is realized is $C(v)=v\cdot p^\star$. So, if $\mathcal D$ a distribution over $\mathbb R^l_+$ and  $\text{supp } z(\cdot,p^\star)\subseteq [0,M]^l$, for some $M>0$, and $R$ is sufficiently big, we have that:
\begin{equation*} 
    \underset{s.t. \nabla C(R)=p^\star}{\text{max}_{C\in\text{Co}(\mathbb R^l_+,\mathbb R_+)}}\mathbb E[U(\zeta(\Delta,R,C))]=\mathbb E[U(\zeta(\Delta,p^\star))]
\end{equation*}
\end{proposition}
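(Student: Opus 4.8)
The plan is to prove the pointwise identity first and then lift it to the expectation, and for the pointwise identity to reduce everything to comparing the trader's feasible set under $C$ with the Walrasian budget set. Writing $R' = R + \Delta - \Lambda$ for the post-trade reserves, the general trade choice problem reads
\[
\max_{\Lambda \ge 0,\ R' \ge 0} U(\Lambda) \quad \text{s.t.} \quad C(R') \ge C(R),
\]
while the one-shot Walrasian problem is $\max_{\Lambda \ge 0} U(\Lambda)$ subject to $p^\star \cdot \Lambda \le p^\star \cdot \Delta$, where I write the budget as an inequality because $p^\star > 0$ and $U$ strictly increasing force it to bind at the optimum. Since the budget constraint is equivalent to $p^\star \cdot R' \ge p^\star \cdot R$, the whole argument becomes a comparison between the reachable set $K_C := \{R' \in \mathbb R^l_+ : C(R') \ge C(R)\}$ and the half-space $H := \{R' : p^\star \cdot R' \ge p^\star \cdot R\}$.

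For the upper bound I would run a supporting-hyperplane argument. The set $K_C$ is convex (this is the geometric content of the CFMM hypothesis on $C$), the point $R$ lies on its boundary because $C$ is increasing, and $\nabla C(R) = p^\star$ is nonzero and points into $K_C$. Hence $p^\star$ is the inward normal of a supporting hyperplane of $K_C$ at $R$, so $p^\star \cdot (R' - R) \ge 0$ for every $R' \in K_C$; that is, $K_C \subseteq H$. Translating back, every $\Lambda$ feasible for $C$ satisfies $p^\star \cdot \Lambda \le p^\star \cdot \Delta$, so the $C$-feasible set is contained in the Walrasian budget set and therefore $U(\zeta(\Delta,R,C)) \le U(\zeta(\Delta,p^\star))$ for every admissible $C$.

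For achievability I would verify that the constant-sum curve $C_0(v) = v \cdot p^\star$ attains the bound. It satisfies $\nabla C_0 \equiv p^\star$, and $K_{C_0} = H$ exactly, so the $C_0$-feasible set coincides with the Walrasian budget set. The only thing left to check is that the Walrasian optimizer is genuinely executable against the finite reserves $R$, i.e. that $R' = R - z(\Delta,p^\star) \ge 0$; this is precisely what the hypothesis $\operatorname{supp} z(\cdot,p^\star) \subseteq [0,R]^l$ guarantees. Thus $U(\zeta(\Delta,R,C_0)) = U(\zeta(\Delta,p^\star))$, which combined with the upper bound yields the claimed maximum and exhibits $C_0$ as a maximizer.

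Finally, to pass to the expectation I would note that the maximizer $C_0$ is the same for every realization of $\Delta$, so the maximum and the expectation commute: the pointwise upper bound integrates to $\mathbb E[U(\zeta(\Delta,R,C))] \le \mathbb E[U(\zeta(\Delta,p^\star))]$ for every $C$, while $C_0$ achieves $\mathbb E[U(\zeta(\Delta,R,C_0))] = \mathbb E[U(\zeta(\Delta,p^\star))]$ as soon as $R$ is large enough that $R - z(\Delta,p^\star) \ge 0$ holds for all $\Delta$ in the support, which $\operatorname{supp} z(\cdot,p^\star) \subseteq [0,M]^l$ together with $R$ sufficiently big supplies. The step I expect to be the main obstacle is the convexity and orientation of the reachable set $K_C$: one must ensure the CFMM hypothesis really delivers a convex $K_C$ with $\nabla C(R)$ as its inward normal, since the supporting-hyperplane inequality points the wrong way if one reads ``convex $C$'' too literally rather than as convexity of the trading set; the feasibility bookkeeping relating $R' \ge 0$ to the support condition is the second point requiring care.
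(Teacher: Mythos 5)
Your proof is correct and follows essentially the same route as the paper's: both reduce the problem to comparing the CFMM reachable set with the Walrasian budget half-space via a supporting hyperplane at $R$ (giving $K_C \subseteq H$, hence the upper bound), verify that the constant-sum curve $C_0(v)=v\cdot p^\star$ attains the bound because the support condition makes the Walrasian trade executable against the reserves, and then pass to expectations. Your closing caveat is well placed: the statement's ``convex $C$'' must indeed be read as convexity of the trading set (quasi-concavity of $C$), which is exactly how the paper's own proof proceeds when it invokes concavity of $C$ to get the containment $S\subseteq E$.
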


As we mentioned, the $Wf(G)$ does not necessarily exist, however, if the Markov process $\{R_k\}_k$ associated to the CFMM has a unique stationary distribution, then this value is well-defined. Moreover, the average of utilities not just converges in expectancy but also converges almost surely to $Wf(G)$.
\begin{proposition}[Existence of Wf]\label{prop:equivalence_eq} Let $G=(R,U,C,\mathcal D)$ be a CFMMWD  and $\mathcal M=\{R_k\}_{k\in\mathbb N}$ its associated Markov process. If $\mathcal M$ has a unique stationary distribution $\pi$, and $\mathbb E_{\pi\times\mathcal D}[|U(\zeta(R,\Delta,C)|]<+\infty$ then
\begin{equation*}
\text{Wf}(G)\underset{a.s.}{=}  \lim_{t\rightarrow+\infty}\frac{\sum_{k=1}^tU(\zeta(R_k,\Delta_k,C))}{t} 
\end{equation*}
\end{proposition}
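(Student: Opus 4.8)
The plan is to recognise Proposition \ref{prop:equivalence_eq} as an instance of the strong law of large numbers (Birkhoff's pointwise ergodic theorem) for a Markov chain driven by i.i.d.\ innovations. Write $f(R,\Delta):=U(\zeta(R,\Delta,C))$ for the per-period observable. By construction the reserve chain evolves as $R_{k+1}=\Phi(R_k,\Delta_k)$ for a measurable update map $\Phi$ (the optimal-trade response), with $\{\Delta_k\}_k$ i.i.d.\ $\sim\mathcal D$ and $\Delta_k$ independent of $\mathcal F_k=\sigma(R_0,\Delta_0,\dots,\Delta_{k-1})$. The target is to show that the time average $t^{-1}\sum_{k=1}^t f(R_k,\Delta_k)$ converges almost surely to the constant $\mathbb E_{\pi\times\mathcal D}[f]$, and that this same constant equals $\text{Wf}(G)$.

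First I would pass to the stationary version of the dynamics, initialising $R_0\sim\pi$, and consider the augmented chain $Y_k:=(R_k,\Delta_k)$, which is itself a Markov chain on $\mathcal S\times\operatorname{supp}\mathcal D$ with transition ``draw a fresh $\Delta_{k+1}\sim\mathcal D$ and set $R_{k+1}=\Phi(R_k,\Delta_k)$''. Any stationary law of $Y$ must have independent marginals with $\Delta$-marginal equal to $\mathcal D$ and $R$-marginal stationary for $\{R_k\}$, hence equal to $\pi$; so $\pi\times\mathcal D$ is the \emph{unique} stationary law of $Y$. I would then invoke the standard equivalence that a Markov chain with a unique stationary distribution has an ergodic stationary process (trivial shift-invariant $\sigma$-field), giving ergodicity of $Y$ under $\pi\times\mathcal D$. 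With the integrability hypothesis $\mathbb E_{\pi\times\mathcal D}[|f|]<\infty$, Birkhoff's pointwise ergodic theorem applied to $f$ then yields both almost-sure and $L^1$ convergence of $t^{-1}\sum_{k=1}^t f(R_k,\Delta_k)$ to $\mathbb E_{\pi\times\mathcal D}[f]$.

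It then remains to identify the limit with $\text{Wf}(G)$. The $L^1$ part of Birkhoff's theorem gives convergence of the Cesàro averages of the expectations, so that $\lim_{t\to\infty} t^{-1}\,\mathbb E[\sum_{k=1}^t f(R_k,\Delta_k)]=\mathbb E_{\pi\times\mathcal D}[f]$, which is precisely $\text{Wf}(G)$ by its definition. Hence the almost-sure limit and $\text{Wf}(G)$ coincide, both equal to $\mathbb E_{\pi\times\mathcal D}[f]$, at least for the stationary initialisation.

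The main obstacle is the transfer from the stationary start $R_0\sim\pi$ to the deterministic initial reserve $R_0=R$ fixed by the game: Birkhoff only delivers the almost-sure statement for $\pi$-almost-every starting point, and on a general (continuous) state space uniqueness of $\pi$ alone does not force the chain to forget its initial condition. To cover an arbitrary fixed $R$ I would close this gap either by strengthening the hypothesis to a recurrence/minorisation property (for instance Harris recurrence of $\{R_k\}$, after which a regenerative decomposition renders the time average independent of the start) or by a coupling argument synchronising the chain started from $R$ with the stationary one. Verifying such a recurrence property for the specific reserve dynamics $\Phi$ is the delicate point; the ergodic-theoretic core above is routine once the augmented chain is set up correctly.
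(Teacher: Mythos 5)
Your proposal follows essentially the same route as the paper's own proof, but is far more careful. The paper disposes of this proposition in one line: it asserts that ``since $U$ is bounded'' the result follows ``by the large law of numbers'' for Markov chains, citing an operations-research textbook --- note that boundedness of $U$ is not even among the stated hypotheses, only the integrability condition $\mathbb E_{\pi\times\mathcal D}[|U(\zeta(R,\Delta,C))|]<+\infty$ is. What you write out (the augmented chain $Y_k=(R_k,\Delta_k)$, uniqueness of $\pi\times\mathcal D$ as its stationary law, ergodicity of the stationary version, Birkhoff for the almost-sure limit, and the $L^1$ half of Birkhoff to identify the limit with $\mathrm{Wf}(G)$) is precisely the rigorous content behind that hand-wave. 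The obstacle you flag at the end is genuine and is not addressed by the paper at all: both $\mathrm{Wf}(G)$ and the claimed almost-sure limit concern the chain started from the game's fixed initial reserve $R_0=R$, while uniqueness of the stationary distribution on a continuous state space only yields the almost-sure statement for $\pi$-almost-every starting point; a chain with a unique stationary law need not forget a deterministic, possibly $\pi$-null, initial condition. So, as you say, a complete proof needs either a strengthening of the hypotheses (positive Harris recurrence, a minorisation condition) or a coupling/regeneration argument tying the chain started at $R$ to the stationary one --- or else the proposition should be read as a statement about the stationary initialisation. Your write-up is therefore strictly more rigorous than the paper's proof and correctly demarcates exactly where the paper's argument is incomplete.
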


Now, let's provide an example of the computation of a $Wf(G)$ for a specific game.  Assume that the CFMM has constant function $C(x,y)=x+y$ and initial liquidity $(R_1,R_2)$. Assume that the representative agent has a Cobb-Douglas utility function $U(x,y)=xy$. The distribution of endowments is given by $\mathcal B(1, 1/2)^2$. Each round $t$, the agent with endowment $\Delta=(\Delta_1,\Delta_2)$ wants to solve
\begin{maxi*}|s|
{\Delta'}{(\Delta_1+x)(\Delta_2+y)}
{}{}
\addConstraint{R_1+R_2+x+y=R_1+R_2}{}
\addConstraint{R_1+x\geq0,R_2+y\geq0}{}
%\addConstraint{df}{}
\end{maxi*}
Easily, we obtain that if  $(\Delta_1,\Delta_2)$ is $(0,0)$ or $(\Delta_{max},\Delta_{max})$, the agent will not interact with the CFMM since he is already maximizing his utility function. On the other hand, if $(\Delta_1,\Delta_2)=(\Delta_{max},0)$ or $(0,\Delta_{max})$, if possible, the agent will execute the trade, depositing $\Delta_{\max}/2$ of one asset and removing $\Delta_{\max}/2$ of the other one. Then the reserves of the CFMM behave as the following Markov chain process:
\begin{equation*}
    R_{k+1} = R_k+Y
\end{equation*}
where $Y$ is the random variable with distribution, $\Pr[Y=(-1/2,1/2)]=1/4$, $\Pr[Y=(-1/2,1/2)]=1/4$ and $\Pr[Y=(0,0)]=1/2$.

Assume that $R_1,R_2$ are integers and $\Delta_{max}=1$. Computing the stationary distribution $\pi$, we obtain that the probability that the Markov process is in the boundary set $\{(R_1+R_2,0),(0,R_1+R_2)\}$ is $p_b=2/R_1+R_2$. By, proposition \ref{prop:equivalence_eq} we have that
\begin{equation}
    \text{Wf}(G) = \left (1-\frac{1}{R_1+R_2}\right)\cdot\text{Welfare of one-shoot}
\end{equation}

Similar to \ref{prop:localWf} one could think that the constant sum Market Maker provides more social welfare than all others CFMM, however this is not true.

\begin{theorem}\label{theorem:negativewf} There exists a utility function $U$ and a distribution of endowments $\mathcal D$ such that the constant sum market maker with normal vector $p^\star$ (i.e. $C(x)=p^\star\cdot x)$ does not maximize welfare. In other words, there exist $C'$ such that $Wf(U,R,C')> Wf(U,R,C).$
\end{theorem}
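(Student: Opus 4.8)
The plan is to build an explicit two-asset counterexample and to exploit the gap between \emph{per-trade} optimality and \emph{long-run} optimality. Proposition~\ref{prop:localWf} only says that, starting from the center $R$, the constant sum curve is the best \emph{single} trade: it says nothing about $\mathrm{Wf}$, which is a long-run average over a reserve process that drifts away from $R$. The mechanism I would isolate is this: because the Walrasian condition $\mathbb E[z(\cdot,p^\star)]=0$ forces zero drift and the constant sum curve holds the marginal price at $p^\star$ everywhere in the interior, the reserves perform an \emph{unbiased} reflecting random walk on the whole feasible segment; being recurrent, it spends a $\Theta(1/N)$ fraction of time (where $N$ is the total reserve) at an exhausted-reserve boundary, and there a positive fraction of arriving traders simply cannot obtain the asset they need and realize strictly suboptimal utility. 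A strictly convex $C'$, by contrast, never exhausts a reserve, so every trade executes, at the cost of slippage. The goal is to show that for a suitable triple $(U,\mathcal D,C')$ the slippage cost of $C'$ is, to leading order in $1/N$, strictly smaller than the boundary cost of the constant sum curve.

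For the baseline I would reuse the worked example already computed in the text: $l=2$, $U(x,y)=xy$, the symmetric distribution $\mathcal D$ on $\{0,\Delta_{\max}\}^2$, with unique interior Walrasian price $p^\star=(1/2,1/2)$. There Proposition~\ref{prop:equivalence_eq} gives the closed form $\mathrm{Wf}(U,R,C)=\bigl(1-1/N\bigr)\cdot\tfrac{3}{8}$, i.e.\ a deficit of exactly $\tfrac{3}{8N}$ below the one-shot optimum $\tfrac{3}{8}$, the entire deficit coming from the frozen boundary states. I would work in the abundant-liquidity regime of Proposition~\ref{prop:convergence} ($N$ large), where this deficit is the clean $\Theta(1/N)$ quantity to beat.

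For the competitor I would take a strictly convex curve, e.g.\ the constant product $C'(x,y)=xy$ (or a constant geometric mean), normalized so that $\nabla C'(R)\parallel p^\star$ at the center. The decisive structural fact is that its reserve set is a hyperbola bounded away from the axes: no state is frozen, every arriving trader can always obtain a positive amount of either asset, and the curvature produces a mean-reverting price process whose stationary law concentrates near the center. Applying Proposition~\ref{prop:equivalence_eq} again, $\mathrm{Wf}(U,R,C')$ equals the one-shot optimum minus an averaged slippage term, and a second-order expansion shows this term is also $\Theta(1/N)$. The theorem then reduces to a comparison of two explicit constants, and the task is to exhibit one instance (tuning the curvature of $C'$, or $\mathcal D$) in which the averaged slippage constant of $C'$ is strictly below $\tfrac38$, yielding $\mathrm{Wf}(U,R,C')>\mathrm{Wf}(U,R,C)$.

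The main obstacle is precisely this competitor chain: unlike the constant sum walk it is a \emph{nonlinear, state-dependent} Markov process on the hyperbola with no transparent closed-form stationary distribution. I would control it in two steps: (i) a Lyapunov drift argument, using strict convexity to produce a restoring drift toward the center, to prove the stationary measure concentrates and assigns only negligible mass to the extreme reserves; and (ii) a uniform second-order slippage estimate, valid on the bulk of that measure, turning the averaged slippage into an explicit constant that can be set against $\tfrac38$. A robust alternative, should the stationary analysis prove delicate, is to choose $R$ and $\Delta_{\max}$ so that the competitor chain is finite-state and to compute its stationary welfare directly. Either way the quantitative heart of the proof, and the step I expect to be hardest, is showing that losing a little to slippage on \emph{every} trade strictly beats losing everything on the $\Theta(1/N)$ fraction of boundary trades suffered by the constant sum maker.
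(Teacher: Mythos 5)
Your mechanism --- the constant sum maker gets stuck at exhausted-reserve boundary states while a strictly convex curve never exhausts reserves --- is exactly the mechanism the paper exploits, but the way you convert it into an inequality is genuinely different from the paper's, and it contains a gap that the paper's proof is specifically designed to avoid. The paper does \emph{not} fix $U(x,y)=xy$ and compare two finite $\Theta(1/N)$ deficits. It takes the sequence $U_n(x,y)=\log(x+1/n)+\log(y+1/n)$ with the same two-point endowment distribution: for this family a boundary visit costs roughly $\log(1/n)$, so the constant sum welfare is bounded above by a term containing $\frac{\pi_b}{2}\log(1/n)$ and tends to $-\infty$ as $n\to\infty$, while for the constant product competitor the stationary law has exponentially thin tails and the welfare stays bounded below by a finite constant uniformly in $n$. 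The comparison is qualitative ($-\infty$ versus finite); no constant ever needs to be computed, and the delicate stationary analysis is reduced to a crude tail bound.

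Your plan instead stands or falls on the claim that the averaged slippage deficit of the constant product maker is strictly smaller than the boundary deficit of the constant sum maker, and this step is not secured --- it is not even clearly true as you set it up. Two concrete problems. First, your target constant $\tfrac{3}{8N}$ is inherited from the paper's worked example, but that figure overstates the boundary cost: at the boundary state $(N,0)$ only the $(1,0)$-traders are blocked, while the $(0,1)$-traders still trade at full utility $\tfrac14$, so the per-visit welfare is $\tfrac{5}{16}$ rather than $\tfrac{3}{16}$, and combining the per-visit deficit $\tfrac1{16}$ with the (uniform) stationary law of the reflected walk gives a deficit with constant several times smaller than $\tfrac38$. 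Second, on the competitor side there are two competing $\Theta(1/N)$ effects, not one: the slippage loss, but also a \emph{gain}, because each trader's indirect utility ($p\mapsto \tfrac{p}{4}$ for a $(1,0)$-trader, $p\mapsto\tfrac{1}{4p}$ for a $(0,1)$-trader) is convex in the price, so the price dispersion of the constant product maker's stationary law, whose log-imbalance variance is itself $\Theta(1/N)$, raises average welfare by a term of the same order as the slippage loss. These can cancel to leading order, so a ``second-order expansion'' does not by itself decide the sign of the comparison; and your finite-state fallback does not apply to the constant product chain, whose Cobb--Douglas trade sizes are irrational. The repair is the paper's trick: keep your two chains, but replace the bounded utility by one under which a boundary visit costs an amount you can send to infinity; then you only need positive recurrence of the boundary for the constant sum chain and any uniform lower bound for the convex competitor.
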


\begin{proposition}\label{prop:welfare_plane} For a $\mathcal D$ distribution of endowments and utility function $U$ and Walrasian equilibrium price $p^\star$, we have that for all $R>0$.
\begin{equation*}
    \lim_{\lambda\rightarrow+\infty}Wf(G^\lambda)=Wf(\mathcal D,U)
\end{equation*}
where $G^\lambda = (\lambda R,C(x)=x\cdot p^\star,U,\mathcal D)$.
\end{proposition}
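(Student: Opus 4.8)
The plan is to exploit the defining feature of the constant sum curve $C(x)=x\cdot p^\star$: its gradient $\nabla C\equiv p^\star$ is constant, so the CFMM quotes the marginal price $p^\star$ regardless of the current reserves. First I would observe that the general trade choice problem facing an agent with endowment $\Delta$ then becomes $\max_\Lambda U(\Lambda)$ subject to $\Lambda\cdot p^\star\le\Delta\cdot p^\star$, $\Lambda\ge 0$, together with the reserve-feasibility constraint $\Lambda\le R_k+\Delta$ (needed so that $R_k+\Delta-\Lambda$ stays in the domain of $C$). The first two constraints are exactly the one-shot Walrasian budget set at price $p^\star$; the upper bound $\Lambda\le R_k+\Delta$ is the only difference. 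Hence, whenever the reserves absorb the Walrasian-optimal trade, i.e.\ $\zeta(\Delta_k,p^\star)\le R_k+\Delta_k$ componentwise (call such a step \emph{interior}), the CFMM solution equals the one-shot solution and $U(\zeta(R_k,\Delta_k,C))=U(\zeta(\Delta_k,p^\star))$. On every step the CFMM feasible set is contained in the Walrasian budget set, so $U(\zeta(R_k,\Delta_k,C))\le U(\zeta(\Delta_k,p^\star))$ holds pointwise; averaging and letting $t\to\infty$ yields the easy upper bound $Wf(G^\lambda)\le Wf(\mathcal D,U)$ for every $\lambda$.

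For the matching lower bound I would control the fraction of non-interior steps. Along interior steps the reserve update is $R_{k+1}=R_k-z(\Delta_k,p^\star)$, so by the Walrasian equilibrium condition $\mathbb E_{\Delta\sim\mathcal D}[z(\Delta,p^\star)]=0$ the process is a martingale, $\mathbb E[R_{k+1}\mid R_k]=R_k$, confined to the bounded polytope $P_\lambda=\{x\ge 0: x\cdot p^\star=\lambda\,(R\cdot p^\star)\}$, with increments bounded uniformly in $k$ and $\lambda$ since $\operatorname{supp}z(\cdot,p^\star)$ is bounded. Truncation can occur only when $R_k$ lies within a fixed distance $O(M)$ of $\partial P_\lambda$; denote this boundary layer by $B_\lambda$. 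Invoking Proposition \ref{prop:equivalence_eq}, $Wf(G^\lambda)=\mathbb E_{\pi_\lambda\times\mathcal D}[U(\zeta(R,\Delta,C))]$ for the (unique) stationary law $\pi_\lambda$; splitting this expectation over $B_\lambda$ and its complement and using that $U$ is bounded on the bundles reachable from the bounded endowment support, I obtain
\[
0\le Wf(\mathcal D,U)-Wf(G^\lambda)\le \pi_\lambda(B_\lambda)\cdot\bigl(\sup U-\inf U\bigr),
\]
so the whole statement reduces to showing $\pi_\lambda(B_\lambda)\to 0$.

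The main obstacle is precisely this stationary-measure estimate. Heuristically $P_\lambda$ has diameter of order $\lambda$ while $B_\lambda$ has fixed width, so $B_\lambda$ occupies a relative fraction $O(1/\lambda)$; the explicit computation in the Bernoulli Cobb--Douglas example, where the boundary probability equals $2/(R_1+R_2)$, confirms both the $O(1/\lambda)$ rate and the mechanism. To make it rigorous I would rescale coordinates by $1/\lambda$ and argue that the rescaled martingale, a mean-zero bounded-increment chain on the fixed simplex $\{x\ge0:x\cdot p^\star=R\cdot p^\star\}$, converges under diffusive scaling to a reflected mean-zero diffusion whose stationary law is absolutely continuous and hence charges no part of the boundary; the rescaled boundary layer, of width $O(1/\lambda)$, then has $\pi_\lambda$-mass tending to $0$. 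A more self-contained alternative is a Lyapunov--drift argument with a barrier function penalizing proximity to each face of $P_\lambda$, showing an inward drift near the boundary bounded below independently of $\lambda$ while typical excursions grow with $\lambda$, forcing the equilibrium occupation of $B_\lambda$ to decay like $1/\lambda$. Either route first requires verifying that $\{R_k\}$ is irreducible and aperiodic on $P_\lambda$ (so that $\pi_\lambda$ is unique, as Proposition \ref{prop:equivalence_eq} demands); this, together with the uniform-in-$\lambda$ control of the boundary occupation, is where the real work lies, the remaining steps being the routine bounds above.
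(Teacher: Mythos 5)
Your proposal takes essentially the same route as the paper: the paper's entire proof consists of observing that the stationary probability of the reserves lying near the boundary tends to zero as $\lambda\rightarrow+\infty$ and then invoking Proposition \ref{prop:equivalence_eq}. Your write-up is in fact more detailed than the paper's one-line argument, and the boundary-occupation estimate $\pi_\lambda(B_\lambda)\rightarrow 0$ that you flag as ``where the real work lies'' is precisely the step the paper asserts without proof.
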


\begin{proposition}\label{prop:equiv_price} If the Markov process associated to $ G$ have a stationary distribution $\pi$, then we have that the average equilibrium price holds:
\begin{align*}
   \mathbb E_{R\sim\pi}\left[\frac{\nabla C(R)}{||\nabla C(R)||_1}\right]\underset{a.s}{=}\lim_{t\rightarrow+\infty}\frac{\sum_{k=1}^t\nabla C(R_k)/||\nabla C(R_k)||_1}{t}
    =\lim_{t\rightarrow+\infty}\frac{\sum_{k=1}^tp_k}{t}.
\end{align*}
 
\end{proposition}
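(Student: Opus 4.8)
The plan is to recognize this proposition as a direct application of the strong law of large numbers for Markov chains (the same ergodic machinery underlying Proposition \ref{prop:equivalence_eq}), combined with a purely definitional identification of the two price sequences. I would first dispose of the rightmost equality, which is essentially tautological. In a complete CFMMWD the updating price process is defined by $p_{k+1}=\varphi(\varphi^{-1}(p_k)+\zeta(\cdot,C,\varphi^{-1}(p_k)))$ with $\varphi(R)=\nabla C(R)/\|\nabla C(R)\|_1$, and since the reserves evolve by $R_{k+1}=R_k+\zeta(\Delta_k,R_k)$, we have $p_k=\varphi(R_k)=\nabla C(R_k)/\|\nabla C(R_k)\|_1$ for every $k$. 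Hence the two time averages agree term by term, so the second equality holds for every realization, not merely almost surely.

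For the first equality I would set $f(R):=\nabla C(R)/\|\nabla C(R)\|_1$ and apply the ergodic theorem to the vector-valued observable $f$. The key simplification relative to Proposition \ref{prop:equivalence_eq} is that the integrability hypothesis is automatic here: because $C$ is strictly increasing and convex, $\nabla C(R)>0$ componentwise, so $f(R)$ lies in the simplex $\Delta^l$ and satisfies $\|f(R)\|_1=1$ identically. Thus $f$ is bounded, each coordinate $f_i$ is $\pi$-integrable with $\mathbb E_\pi[|f_i|]\le 1$, and no moment condition needs to be verified.

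I would then invoke the strong law of large numbers for Markov chains coordinatewise: for the chain $\{R_k\}_k$ admitting stationary distribution $\pi$ and any $\pi$-integrable $g$, one has $\frac1t\sum_{k=1}^t g(R_k)\to \mathbb E_\pi[g]$ almost surely. Applying this to each $g=f_i$ and reassembling the coordinates yields
\begin{equation*}
\frac1t\sum_{k=1}^t \frac{\nabla C(R_k)}{\|\nabla C(R_k)\|_1}\;\xrightarrow[t\to\infty]{a.s.}\;\mathbb E_{R\sim\pi}\!\left[\frac{\nabla C(R)}{\|\nabla C(R)\|_1}\right],
\end{equation*}
which is precisely the first equality, and combining with the definitional identity $p_k=f(R_k)$ closes the chain of equalities.

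The main obstacle is not the algebra but the measure-theoretic subtlety of ergodicity: the Markov-chain SLLN delivers convergence to $\mathbb E_{\pi}[f]$ for the \emph{specified} $\pi$ only when the ergodic average does not depend on the ergodic component entered by the chain. If $\pi$ fails to be the unique stationary measure, the limit could instead be $\mathbb E_{\pi'}[f]$ for some other invariant $\pi'$. I would therefore lean on the standing hypothesis used in Proposition \ref{prop:equivalence_eq}, that $\mathcal M$ has a unique stationary distribution, under which positive Harris recurrence yields the almost-sure ergodic theorem from every starting state. Checking the precise irreducibility and recurrence conditions on the continuous state space $\mathcal S=\{R\in\mathbb R_{\geq0}^l: C(R)=C(R_0)\}$ that license this SLLN is the only genuinely technical point; everything else is bookkeeping already carried out in the proof of Proposition \ref{prop:equivalence_eq}.
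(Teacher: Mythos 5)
Your proposal matches the paper's own (very terse) proof sketch: the second equality is dispatched by definition via $p_k=\varphi(R_k)=\nabla C(R_k)/\|\nabla C(R_k)\|_1$, and the first by the strong law of large numbers for Markov chains, exactly as in Proposition \ref{prop:equivalence_eq}. Your write-up is in fact more careful than the paper's, since you observe that integrability is automatic (the observable is simplex-valued, hence bounded) and you correctly flag that uniqueness of the stationary distribution (or positive Harris recurrence) is needed for the almost-sure limit to equal $\mathbb E_{R\sim\pi}[\nabla C(R)/\|\nabla C(R)\|_1]$ rather than an average over some other ergodic component.
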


\begin{theorem}\label{theorem:price_convergence} Assuming that $C$ is a convex complete CFMM an $\mathcal D$ is a distribution with bounded support. Then sequence of average prices of the CFMMWD $G^\lambda=(\lambda R,C,U,\mathcal D)$ converges to $p^\star$. In other words, if the liquidity is sufficiently large, we have that the oracle price of the CFMM converges to the Walrasian equilibrium price.
\end{theorem}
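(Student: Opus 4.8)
The plan is to reduce the claim about the average price to a statement about the stationary distribution of the reserve process, and then to show that this distribution concentrates on the unique point whose marginal price is $p^\star$. First I would invoke Proposition \ref{prop:equiv_price}: assuming the Markov process $\{R_k\}$ associated to $G^\lambda$ admits a unique stationary distribution $\pi_\lambda$, the average equilibrium price equals $p_a(\lambda)=\mathbb{E}_{R\sim\pi_\lambda}[\varphi_G(R)]$. Pushing $\pi_\lambda$ forward through $\varphi_G$ (a diffeomorphism onto $\mathrm{int}(\Delta^l)$, by completeness) yields a stationary law $\nu_\lambda$ for the price chain $p_{k+1}=\varphi\big(\varphi^{-1}(p_k)+\zeta(\Delta_k,C,\varphi^{-1}(p_k))\big)$, so it suffices to prove $\mathbb{E}_{p\sim\nu_\lambda}[p]\to p^\star$.

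The second ingredient is a conservation identity. Since $R_{k+1}=R_k+\big(\Delta_k-\zeta(\Delta_k,R_k)\big)$, stationarity forces $\mathbb{E}_{R\sim\pi_\lambda}[R_{k+1}]=\mathbb{E}_{R\sim\pi_\lambda}[R_k]$ (finite because $\mathcal D$ has bounded support, hence the increments are bounded), which gives the Walrasian-type balance $\mathbb{E}_{R\sim\pi_\lambda}\mathbb{E}_{\Delta\sim\mathcal D}[z(\Delta,R)]=0$, where $z(\Delta,R)=\zeta(\Delta,R)-\Delta$ is the trader's net demand against the CFMM. Next I would pass to the large-liquidity limit: because the reserves sit at scale $\lambda$ while each admissible trade is $O(1)$ (endowments are bounded), the marginal price moves by $O(1/\lambda)$ along any single trade, so $z(\Delta,R)$ converges uniformly to the fixed-price Walrasian excess demand $z(\Delta,\varphi_G(R))$ — this is the infinitesimal-price-impact regime already used in Propositions \ref{prop:localWf} and \ref{prop:welfare_plane}. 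Writing $g(p):=\mathbb{E}_\Delta[z(\Delta,p)]$ for the aggregate excess demand (so that $g(p^\star)=0$ uniquely), the balance identity becomes $\mathbb{E}_{p\sim\nu_\lambda}[g(p)]\to 0$.

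The crux is then a concentration statement: I would show that $\nu_\lambda$ collapses to a point as $\lambda\to\infty$, i.e. $\mathrm{Var}_{\nu_\lambda}(p)\to 0$. The two ingredients are (i) the per-step price increments are $O(1/\lambda)$ by the scaling computation above, and (ii) the expected one-step map has a restoring drift toward its unique fixed point $p^\star_s(\lambda)$, since monotonicity of $g$ near the unique equilibrium makes the price dynamics mean-reverting. A Lyapunov/stochastic-approximation argument then bounds the stationary variance by the increment scale, giving concentration around $p^\star_s(\lambda)$; Proposition \ref{prop:convergence} identifies the concentration point, as $p^\star_s(\lambda)\to p^\star$. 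Finally, concentration lets me pass the expectation through $g$ (the Jensen gap is $O(\mathrm{Var}_{\nu_\lambda}(p))\to 0$), so $g\big(\mathbb{E}_{\nu_\lambda}[p]\big)\to 0$, and uniqueness of the zero of $g$ forces $p_a(\lambda)=\mathbb{E}_{\nu_\lambda}[p]\to p^\star$.

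I expect step (ii) and the concentration claim to be the main obstacle: establishing a genuinely restoring drift requires a stability property of the Walrasian equilibrium (e.g. monotonicity or gross-substitutes of $g$, or the almost-sure convergence $X_n\to p^\star$ assumed in Section \ref{one-shoot}), and controlling the stationary variance uniformly in $\lambda$ is where the real work lies; by contrast the conservation identity and the vanishing-price-impact limit are comparatively routine.
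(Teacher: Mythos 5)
Your plan takes a genuinely different route from the paper's. The paper's proof (itself only a sketch) works on the reserve side: it proves a lemma that for a constant-sum CFMM with price $p\not=p^\star$ the reserve process is a random walk with nonzero drift $-\mathbb E_{\mathcal D}[z(\Delta,p)]$ whose stationary mass escapes every interior set, and then approximates the convex curve by piecewise-linear CFMMs, arguing that the stationary probability of sitting in any piece with price $p\not=p^\star$ vanishes. You instead work on the price side: a stationarity flow-balance identity, a vanishing-price-impact limit, and a variance/concentration estimate. The comparison is instructive, but your proposal has two genuine gaps, only one of which you flag.

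The gap you flag is real and cannot be removed for free: a restoring drift for the price chain does not follow from uniqueness of the Walrasian equilibrium. The induced dynamics are t\^atonnement-like (the CFMM price of a good rises exactly when realized excess demand for it is positive), and by Scarf-type examples a unique equilibrium can be t\^atonnement-unstable once $l\geq 3$; so your step (ii) needs an added hypothesis (gross substitutes, monotonicity of $g$) that is not in the statement of Theorem \ref{theorem:price_convergence}. Only for $l=2$ does Walras's law plus uniqueness pin the sign of $g_1$ on either side of $p^\star$ and make the drift genuinely restoring. (The paper's gluing step quietly faces the same issue: a nonzero drift in a wrong-price piece need not point toward the correct-price piece.) The gap you do not flag sits in the steps you call routine. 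The $O(1/\lambda)$ price-impact estimate is not uniform over the state space: $\{R: C(R)=C(\lambda R_0)\}$ contains arbitrarily unbalanced reserves (e.g. $(\varepsilon,\lambda^2 k/\varepsilon)$ on a constant-product curve), where a bounded endowment moves the marginal price by $\Theta(1)$ and where $z(\Delta,R)$ is not even bounded — a trader can withdraw on the order of an entire reserve, so "bounded support of $\mathcal D$ implies bounded increments" is false. This simultaneously threatens the integrability needed for the conservation identity $\mathbb E_{\pi_\lambda\times\mathcal D}[z(\Delta,R)]=0$ and the passage from it to $\mathbb E_{\nu_\lambda}[g(p)]\rightarrow 0$. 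To justify those steps you must first show that $\nu_\lambda$ puts vanishing mass near $\partial\Delta^l$, which is itself a non-escape-of-mass statement of exactly the kind the paper's lemma is meant to supply; as written, your argument is circular — concentration is needed to validate the balance identity that is supposed to feed the concentration argument. A final minor point: Proposition \ref{prop:convergence} assumes $\nabla C(R)=p^\star$ at the base reserves, so invoking it to identify the concentration point silently adds that hypothesis to the theorem.
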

The proof of the theorem can be found in the appendix. Now, we will provide an example. Assume that players have Coubb-douglas utility function and that the distribution of endowments is given by $\mathcal U[0,1]^2$. Then, by symmetry, one can easily prove that $p_a=p^\star=(1,1)$. We simulated the CFMM with Walrasian demand with Uniswap V2 curvature and $t=2\cdot 10^6$. The initial reserves are set $(950,1050)$ and $(1000,1000)$ respectively.
\begin{figure}[!h]
\centering
\begin{subfigure}{.5\textwidth}
  \centering
  \includegraphics[width=.9\linewidth]{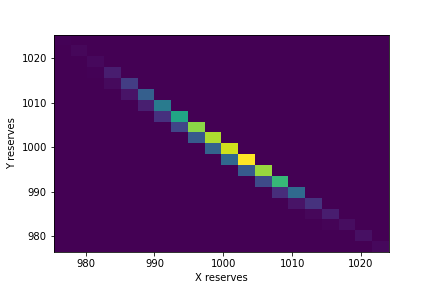}
  \caption{Reserves heat map}
  \label{fig:sub1}
\end{subfigure}%
\begin{subfigure}{.5\textwidth}
  \centering
  \includegraphics[width=.9\linewidth]{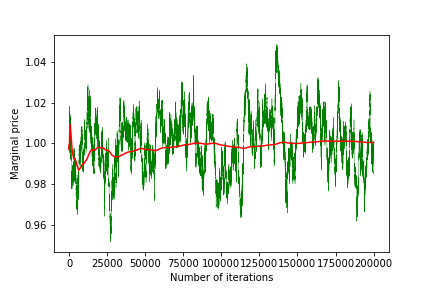}
  \caption{Prices and average price}
  \label{fig:sub2}
\end{subfigure}
\caption{Simulation of CFMM with Walrasian demand $(R,C=xy,U=xy,\mathcal U[0,1]^2)$ and initial reserves (1000,1000)}
\label{fig:test}
\end{figure}
\begin{figure}[!h]
\centering
\begin{subfigure}{.5\textwidth}
  \centering
  \includegraphics[width=.9\linewidth]{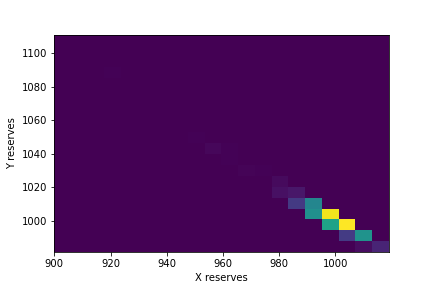}
  \caption{Reserves heat map}
  \label{fig:sub1}
\end{subfigure}%
\begin{subfigure}{.5\textwidth}
  \centering
  \includegraphics[width=.9\linewidth]{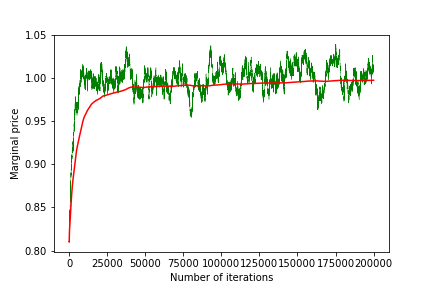}
  \caption{Prices and average price}
  \label{fig:sub2}
\end{subfigure}
\caption{Simulation of CFMM with Walrasian demand $(R,C=xy,U=xy,\mathcal U[0,1]^2)$ and initial reserves (900,1111.11)}
\label{fig:test}
\end{figure}

\section{Maximal extractable value in exchange economy}
In this section, we will try to lower bound the MEV in a Walrasian auctioneer mechanism with and without CFMM. 
Similar to the section \ref{one-shoot}, we assume that all players have a concave utility function $U$ and the endowments are drawn from a distribution $\mathcal D$. We assume that the players endowments and utility functions are truthfuly reported to the auctioneer. 
Assume the existence of a Walrasian auctioneer (in PBS this will be the builder) responsible for batching the transactions received and compute the Walrasian equilibrium price $p^\star$ and the output allocations. However, due to computation limitation per period of time (gas limit in Blockchain context) the number of transactions that the builder can settle is bounded by some number $N$. Moreover, we assume that this player share the same utility function and endowments distribution. 

Now, we will lower bound the MEV that the builder can extract in different scenarios. In this context, a transaction $t$ will be a tuple $(U,\Delta)$. First, we will assume that there is no bribing and that the probability of a non-builder transaction being added is uniformly random. That is, if the builder received $M$ transactions, then $\Pr[tx\in B]=\frac{N}{M}$.

\textbf{Builder}: In this context, the builder will be the responsible for executing the Walrasian equilibrium pricing algorithm for a given set of transactions $\mathcal T$. We will assume that the builder has the same utility function $U$ and initial endowments $\Delta$ drawn from the distribution $\mathcal D$. 
\begin{definition}
We say that the builder is \textbf{informed} if he has access to the transactions $\mathcal T$ (perfect signal). 
We say that a builder is a \textbf{censorship builder} if, for any given set of transactions $\mathcal T$, he can choose a subset of transactions $\mathcal T'\subseteq\mathcal T$ and compute the Walrasian allocation of this subset. 
We say that a builder \textbf{censorship-minimizer} if he constructs the Walrasian allocation of any subset with maximal cardinality bounded by $N$.
\end{definition}

Similarly to \cite{mazorra2022price}, we define the MEV as an optimization problem.
\begin{definition} Let $\Delta_b$ be the endowments of the builder and $\mathcal T$ be the set of transactions received. Then, the MEV of an informed and censorship builder $\text{MEV}(\Delta_b,\mathcal T)$ is:
\begin{maxi*}|s|
{B}{U(x_b(B\cup \{(U,\Delta_b)\})})
{}{}
\addConstraint{ B\subseteq \mathcal T, |B| \leq N-1}{}
\end{maxi*}
In the case that the builder minimize censorship, the MEV is defined analogously by adding the constraint that $|B| =\min\{|\mathcal T|+1,N-1\}$.
\end{definition}
In the limit cases, that is, if the number of transactions $M\rightarrow +\infty$ the MEV of the censorship-minimizer builder converges to the MEV of the censorship builder.
Observe that in general, the MEV is greater than zero. For example, assume that we have $l=2$, $\mathcal D$ is the distribution that $\Pr[(\Delta_1,\Delta_2)=(1,0)]=\Pr[(\Delta_1,\Delta_2)=(0,1)]$ and the players have the Coubb-douglas utility function $U(x,y)=xy$. Assume that the builder receives $3$ transactions. Two of the form $tx_1=(U(x,y)=xy,(1,0))$ and one of the form $tx_2=(U(x,y),(0,1))$. Moreover, assume that the builder has $(0,1)$ of initial endowments. Then, if the builder adds all transactions, its final endowments are $(1/2,1/2)$ obtaining a total utility of $1/4$. However, if the player censors the transaction $tx_2$, then his final endowments are $(1,1/2)$ leading to a utility of $1/2$.

\begin{proposition} When the builder is uninformed, individually rational, and risk-averse, then they want to add as many transactions as possible in a block. More formally, an uninformed builder has non-censoring as a dominant strategy. Moreover, we have that 
\begin{align*}
    \mathbb E(\text{MEV})&=Wf(\mathcal D,U)
\end{align*}
with $p$ following the distribution $X_{N}$ conditioned by one transaction being $\Delta$ and $p^\star$ being the Walrasian equilibrium. On the other hand, the expected utility of non-builder player is $\Pr[tx\in B]Wf(\mathcal D,U)+(1-\Pr[tx\in B])\mathbb E_{\Delta\sim\mathcal D}[U(\Delta)]$ and so the difference between the value extracted from a builder and a non-builder players is 
\begin{equation}
(1-\Pr[tx\in B])(Wf(\mathcal D,U)-\mathbb E_{\Delta\sim \mathcal D}[U(\Delta)]).
\end{equation}
\end{proposition}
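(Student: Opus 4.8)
The plan is to split the statement into four pieces: (i) non-censoring is a dominant strategy for an uninformed builder, (ii) the resulting expected payoff equals $Wf(\mathcal D,U)$, (iii) the non-builder's expected payoff, and (iv) the difference. Parts (iii) and (iv) are bookkeeping once (i)–(ii) are in place. The backbone of everything is an \emph{exchangeability} identity: since every transaction carries the common utility $U$ and an endowment drawn i.i.d.\ from $\mathcal D$, and the builder's own $(U,\Delta_b)$ has the same law, the builder is exchangeable with any included transaction. Hence if a block contains $n$ transactions and clears at the Walrasian price $p$, the builder's expected payoff equals the expected per-capita welfare of that $n$-agent economy,
\begin{equation*}
\mathbb E\!\left[U(\zeta(\Delta_b,p))\right]=\mathbb E\!\left[\tfrac{1}{n}\sum_{i=1}^{n}U(\zeta(\Delta_i,p))\right].
\end{equation*}
I would prove this by symmetrizing over which transaction is labelled ``the builder''; the same identity applied to an included non-builder transaction shows every transaction that makes it into the block has the same expected payoff.

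For part (i) I would first dispose of autarky: for any realized price $p$, retaining $\Delta_b$ is budget-feasible, so $U(\zeta(\Delta_b,p))\ge U(\Delta_b)$ pointwise; hence including transactions (so trade occurs) weakly dominates censoring everything, for every realization and a fortiori in expectation. Being uninformed, the builder cannot condition on contents, so its only lever is how many transactions to include, and individual rationality together with risk aversion push toward including all: as $n$ grows the clearing price $p\sim X_n$ concentrates at its mean $p^\star$ (invoking the assumption $\mathbb E[X_n]=p^\star$ and $X_n\to p^\star$ almost surely), and a risk-averse builder prefers the less dispersed price. For part (ii) I would pass to the limit in the exchangeability identity: using $X_n\to p^\star$ a.s., continuity of $\zeta$ and $U$, bounded endowment support, and dominated convergence, the per-capita welfare converges to $\mathbb E_{\Delta\sim\mathcal D}[U(\zeta(\Delta,p^\star))]=Wf(\mathcal D,U)$.

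With (i)–(ii) in hand, part (iii) follows by conditioning: a non-builder transaction is included with probability $\Pr[tx\in B]$, in which case by exchangeability its expected payoff is $Wf(\mathcal D,U)$, and is dropped otherwise, retaining $\mathbb E_{\Delta\sim\mathcal D}[U(\Delta)]$; the weighted sum is the stated expression. Part (iv) is the one-line subtraction $Wf-[\Pr[tx\in B]Wf+(1-\Pr[tx\in B])\mathbb E_\Delta[U(\Delta)]]=(1-\Pr[tx\in B])(Wf-\mathbb E_\Delta[U(\Delta)])$, and I would remark it is nonnegative because gains from trade give $Wf(\mathcal D,U)\ge\mathbb E_{\Delta\sim\mathcal D}[U(\Delta)]$, which is exactly the advertised MEV lower bound.

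I expect the main obstacle to be the monotonicity buried in part (i): showing that including \emph{more} transactions (not merely some) never lowers the builder's expected payoff. The clean gains-from-trade bound only separates ``trade'' from ``autarky''; the step from a smaller positive block to a larger one hinges on how the dispersion of $X_n$, which shrinks as $n\to\infty$, interacts with the builder's indirect utility in price. That indirect utility is only \emph{quasiconvex} in $p$, so a direct Jensen argument is unavailable, and I would need to combine risk aversion with the explicit concentration of $X_n$ (or argue monotonicity of expected per-capita Walrasian welfare in $n$ via an exchangeable removal/averaging argument) to make the dominance rigorous rather than heuristic.
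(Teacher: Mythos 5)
Your proposal follows essentially the same route as the paper: risk aversion plus uninformedness forces inclusion of as many transactions as possible, truthfulness/exchangeability makes the builder's expected payoff equal to $Wf(\mathcal D,U)$, and the non-builder expression and the final difference follow by conditioning and a one-line subtraction. The monotonicity step you flag as the main obstacle is not resolved in the paper either --- its three-sentence proof simply asserts that ``a risk averse agent will add as much transaction to reduce the risk of having a worst price than the Walrasian equilibrium price'' and then invokes the standing assumption $\mathbb E[X_n]=p^\star$, so your more detailed treatment (the exchangeability identity, the gains-from-trade bound, and the concentration of $X_n$) is strictly more careful than the original, and the gap you identify is one the paper shares rather than closes.
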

\begin{proof}
By assumption made in preliminars, a risk averse agent will add as much transaction to reduce the risk of having a worst price than the Walrasian equilibrium price. Since, by the assumption,  the agent is truthful, we have that the builder will always add its transaction and remove another one. In this case, by the assumption its expected is $Wf(\mathcal D)$. 
\end{proof}
The last proposition gives us a lower bound on the one block MEV in a pure exchange economy with random endowments. Similar lower bounds emerge in frequent batch auctions such as \cite{mcmenamin2022fairtradex}. Moreover, we prove that with enough demand of exchange (that is $\Pr[tx\in B]<1$) the value extracted by the builder is strictly greater than the one extracted by a non-builder player. However, clearly if there is demand of exchange is because $Wf(\mathcal D)>\mathbb E_{\Delta\sim\mathcal D}[U(\mathcal D)]$. In other words,  a blockchain with unique validators per slot always have strict asymmetry of payoffs between validators and non-validators players if and only if there is sufficient demand of exchange through the chain.

\subsection{MEV in CFMM}

In this paper, we did not took into account the utility of the liquidity provider. Since, in general, any passive liquidity provider is weak against adverse selection. More formally, we have the following result.
Let $i\in\mathcal I$ be a player with initial endowments $R\in\mathbb R^k_+$, with utility function $U:\mathbb R^k_+\rightarrow \mathbb R$. There is no CFMM $C:\mathbb R^k_+\rightarrow \mathbb R$ such that the solutions of the optimization problems:

\begin{minipage}{.5\linewidth}
\begin{maxi*}|s|
{x}{U(x)}
{}{}
\addConstraint{c\cdot x = c\cdot R}{}
%\addConstraint{df}{}
\end{maxi*}
\end{minipage}
\begin{minipage}{.3\linewidth}
\begin{maxi*}|s|
{x}{-c\cdot x}
{}{}
\addConstraint{C(x)=C(R)}{}
%\addConstraint{df}{}
\end{maxi*}
\end{minipage}
\newline
$x^\star_1$ and $x^\star_2$ are have the same utility.
The first optimization problem models the best response of the player if there is a referential market with price vector $c$ (the Walrasian equilibrium price). That is, the liquidity provided by the LP is significantly small compared to the one provided by the off chain market.
The second optimization problem models how does the market react if the player posts a contract with initial reserves $R$ and CFMM $C$.

\textbf{Example}: Assume that the constant function is $C(x,y)=xy$ and the utility of the LP is $U=xy$. Can we find a utility function such that the optimization problem share the same solution?  Assume that the initial endowment is $R=(1,1)$. Wlog, we can assume that the vector price is of the form $c=(1,p)$. Then, the solution of the second optimization problem is realized in $R'=(\sqrt{p},1/\sqrt{p})$ and so the utility is $1$. On the other hand, the solution of he first optimization problem is given in $(\frac{1+p}{2},\frac{1+p}{2p})$ and so the utility is $\frac{(1+p)^2}{4p}$ and is strictly larger than $1$ for $p\not=1$. Therefore, we have that an LP with utility function $U=xy$ makes worse providing liquidity than rebalancing in the market.

\begin{proposition}\label{prop:LP_loss} Let $U$ be a strictly increasing concave utility function and $V_1,V_2\subseteq \mathbb R^l_{++}$ open sets. Then, for all complete CFMM $C$ it holds $U(x^\star_1(c,R))> U(x^\star_2(c,R))$ for every in $c\in V_1$ and initial reserves $R\in V_2$ such that $\nabla C(R)\not=c$.  
\end{proposition}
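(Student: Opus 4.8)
The plan is to reduce the inequality to a comparison of two wealth levels at the fixed price $c$, and then to invoke strict monotonicity of the associated indirect utility. Write $w:=c\cdot R$ and introduce the indirect utility at price $c$,
\[
v(m):=\max\{\,U(x):x\geq 0,\ c\cdot x\leq m\,\}.
\]
Since $U$ is strictly increasing, the budget constraint binds at every optimum, so the relaxed ($\leq$) and the equality problems share the same optimal value; hence $v(w)=U(x^\star_1(c,R))$. Moreover $v$ is \emph{strictly} increasing: given $m'>m$, taking an optimal bundle for $m$ and raising a single coordinate by $(m'-m)/c_j$ stays feasible at wealth $m'$ and strictly increases $U$. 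The point of the reduction is that $x^\star_2(c,R)$ is feasible for the problem defining $v$ at wealth $w_2:=c\cdot x^\star_2(c,R)$, so $U(x^\star_2(c,R))\leq v(w_2)$. It therefore suffices to prove the strict wealth loss $w_2<w$, since then $U(x^\star_2(c,R))\leq v(w_2)<v(w)=U(x^\star_1(c,R))$.

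The crux is thus to show $c\cdot x^\star_2(c,R)<c\cdot R$ whenever $\nabla C(R)\neq c$. By definition $x^\star_2(c,R)$ minimizes the linear functional $x\mapsto c\cdot x$ over the level set $\mathcal S=\{x:C(x)=C(R)\}$, and since $R\in\mathcal S$ we already have $w_2\leq w$; only equality must be ruled out. Because $C$ is convex and increasing, the first-order condition at the minimizer gives $\nabla C(x^\star_2)=\mu\,c$ with $\mu>0$, i.e. $\varphi_G(x^\star_2)=c/\|c\|_1\in\operatorname{int}(\Delta^l)$. Completeness of $C$ means $\varphi_G$ is injective with image $\operatorname{int}(\Delta^l)$, so $x^\star_2$ is the \emph{unique} point of $\mathcal S$ whose marginal price is proportional to $c$. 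Now suppose $w_2=w$; then $R$ too minimizes $c\cdot x$ over $\mathcal S$, so $R$ satisfies the same first-order condition $\nabla C(R)\propto c$, and injectivity of $\varphi_G$ forces $R=x^\star_2$, whence $\nabla C(R)\propto c$ — contradicting $\nabla C(R)\neq c$. Therefore $w_2<w$ strictly.

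Combining the two steps yields $U(x^\star_1(c,R))>U(x^\star_2(c,R))$ for every $c\in V_1$ and $R\in V_2$ with $\nabla C(R)\neq c$, as claimed; the argument is uniform in $(c,R)$ and uses openness of $V_1,V_2\subseteq\mathbb R^l_{++}$ only to guarantee $c,R>0$ (so $c/\|c\|_1\in\operatorname{int}(\Delta^l)$ and the perturbation in the definition of $v$ stays feasible). I expect the strict wealth loss $w_2<w$ to be the main obstacle: the weak inequality $w_2\leq w$ is immediate from $R\in\mathcal S$, but upgrading it to a strict one is exactly where convexity (minimization of a linear functional characterized by $\nabla C\propto c$) and completeness (injectivity of $\varphi_G$, pinning $x^\star_2$ as the unique arbitrage-free state at price $c$) are indispensable. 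A fully rigorous write-up should also confirm that the minimizer $x^\star_2$ is interior to $\mathbb R^l_{++}$ so the Lagrange condition applies, and should fix the normalization convention in the hypothesis "$\nabla C(R)\neq c$" (equality of directions versus of vectors); these points affect only bookkeeping, not the substance of the argument.
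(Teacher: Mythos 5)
Your proof is correct and takes essentially the same route as the paper's: both arguments reduce the claim to the strict wealth loss $c\cdot x^\star_2(c,R) < c\cdot R$, established by observing that equality would make $R$ itself a minimizer of the linear functional over the level set and hence force $\nabla C(R)\propto c$ (the tangency/first-order contradiction), and then conclude via strict monotonicity of $U$. The only cosmetic difference is the final step --- the paper scales $x^\star_2$ by $\lambda=(c\cdot R)/(c\cdot x^\star_2)>1$ to land on the budget hyperplane of the first problem, while you package the same fact as strict monotonicity of the indirect utility $v$ (your detour through injectivity of $\varphi_G$ is redundant once the first-order condition at $R$ is in hand); the proportionality-versus-equality issue in the hypothesis $\nabla C(R)\neq c$ that you flag is a genuine imprecision shared by the paper itself.
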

\begin{figure}[!h]
    \centering
    \includegraphics[scale=0.15]{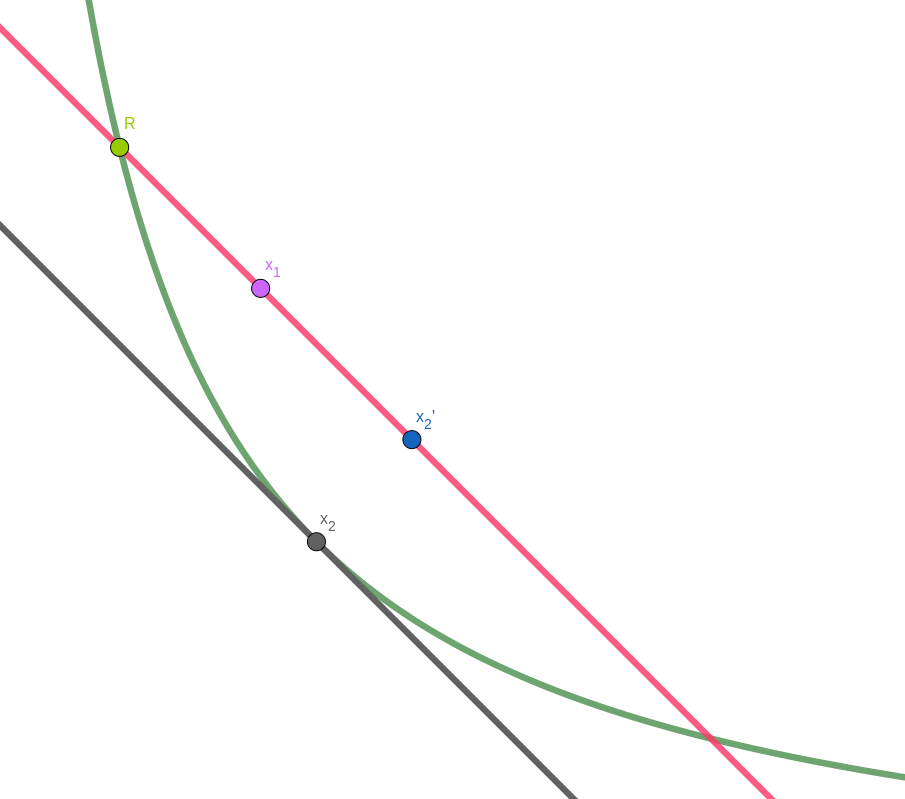}
    \caption{Geometric proof of LP loss}
    \label{fig:my_label}
\end{figure}
In other words, there is no CFMM that maximizes the trader's utility and the utility of the LP providers at the same time. Or, in other words, the LP positions are always exposed to adversarial selection, even with agents with Walrasian demands.

Now, we will discuss the expected MEV generated by the CFMM. For a given CFMM with reserves $R$. If validators' endowments are drawn randomly from the set of endowments $\mathcal D$, then the expected MEV is $\mathbb E[U(\zeta(\Delta,R,C)]$. Observe that in this case, the reserves that maximize the welfare are the ones that maximize the MEV. An interesting question is which reserves (or point in the feasible set) minimizes the MEV. In presence of another off chain market maker and  players that just values the non-risk asset the MEV is minimized when the marginal price of assets coincide off chain and in the CFMM (no arbitrage condition, see \cite{angeris2021constant}). However, this is not true with players with Walrasian demands. More specifically, there are cases where the CFMM being in the Walrasian equilibrium price maximizes MEV but also there are games where the CFMM being in the Walrasian equilibrium price minimizes the MEV

\textbf{Example}: We have the same distribution of endowments as previous examples and utility function $U(x,y)=\log(x)+\log(y)$. Clearly, in this case the MEV is not minimized in reserves $(R,R)$ since taking the reserves $\underset{x\rightarrow+\infty}{\lim}\mathbb E[U(\zeta(\Delta,C,(R^2/x,x)))]=-\infty$.
\begin{comment}
\begin{theorem}\label{theorem:saddle} Let $G$ be a CFMM with Walrasian demands and $1-$homogeneous CFMM $C$. Then, the stochastic prices $p^\star_s$ are local extremum or saddle points of $g_R:p\mapsto\mathbb E[U(\zeta(\Delta,\varphi_R(p),C)]$ with $\varphi_R:\mathcal S\rightarrow \Delta^n$ defined as \ref{}.
\end{theorem}
\begin{corollary} If the map $g$ is strictly concave,
\end{corollary}
 Let $C$ be a constant function market maker and $p^\star$ the Walrasian equilibrium price. For each $\lambda>0$ consider the reserves $R_\lambda$ such that
\begin{equation*}
  R_\lambda = \text{argmin}_{R\in\mathcal S(\lambda R_0)}\mathbb E[U(\zeta(\Delta,R,C)]  
\end{equation*}
where $\mathcal S(R)=\{R':C(R')=C(R)\}$ and $R_0\in\mathbb R^l$. Then $\underset{\lambda \rightarrow +\infty}{\lim} \frac{\nabla C(R_\lambda)}{||\nabla C(R_\lambda)||_1} = p^\star$.
\end{comment}

\section{Future Work}
While in the special case of the model we analysed the MEV does not reduce social welfare, this is not in general the case. Characterising the loss in social welfare from MEV in a more general model where the validator has visibility into the content of transactions. This naturally motivates the design of variations of CFMMs that have higher social welfare. Another open problem is under what conditions does the average price of the CFMM converges to the Walrasian equilibrium and if so how does the MEV impact on the rate convergence.

\printbibliography
\appendix
\section{Appendix(Work in Progress)}\label{appendix}

\textbf{Proof \ref{prop:manifold}}: Clearly, we have that $\nabla C(x)\not=0$ for all $x\in \mathbb R^l_{++}$. In other words, all non-trivial elements are regular. By implicit function theorem, we have that $C^{-1}(R)$ is a manifold of dimension $l-1$.$\square$

\textbf{Proof \ref{prop:smooth}} First, using Lagrangian and the implicit function theorem, we can prove that the function $\zeta$ can be defined locally and is smooth. We can extend it globally since the maximum always exist and by continuity of the function
\begin{equation*}
    L(\Delta,\Lambda,R,\mu)=U(\Delta)-\mu(C(R+\Delta-\Lambda)-C(R)).\square
\end{equation*}

\textbf{Proof \ref{theorem:existence}}:
Let's consider the map $F:p\mapsto \mathbb E_{\Delta\sim\mathcal D}[\varphi\left(\varphi^{-1}(p^\star_s)+\zeta(\Delta,C,\varphi^{-1}(p^\star_s))\right)]$. Since $\varphi,\varphi^{-1}$ and $\zeta(\Delta,C,R)$ are continuous, we deduce that $F$ is continuous by using the convergence dominated theorem. 
Now, we will prove $\varepsilon>0$ such that $N_\varepsilon=\Delta^l_\varepsilon=\{x\in\Delta^l:x\geq\varepsilon\textbf{1}\}$ holds $F(N_\varepsilon)\subseteq N_\varepsilon$. Assume otherwise, then exist a sequence of points $\{x_n\}_n$ such that $d(F(x_n),\partial \Delta^l)\rightarrow0$ (where $d$ denotes the Euclidean distance and $\partial \Delta^l$ the frontier of the $l$ dimensional simplex). Since $\Delta^l$ is compact, wlog we can assume that $F(x_n)\rightarrow q$ for some $q\in\Delta^l$ and that $F(x_n)_i\rightarrow0$ and $x_j\rightarrow0$ for some $i$ and $j$ (taking subsequences). Since $d(F(x_n),\partial \Delta^l)\rightarrow 0$, we deduce that $q\in\partial \Delta^l$. However, since the support of the endowments $\mathcal D$ is finite, we have that $i=j$ (is deduced from the fact that if $C$ is complete). But if $i=j$, we would deduce that $p_i=0$, leading to a contradiction. 
Therefore, exist $N_\varepsilon$ such that $F(N_\varepsilon)\subseteq N_\varepsilon$. Since $N_\varepsilon\cong \Delta^l$, using the Brouwer theorem \cite{hatcher2005algebraic}, we have that exist $x^\star$ such that $F(x^\star)=x^\star$.$\square$

\textbf{Proof \ref{prop:convergence}}
 Since the Walrasian equilibrium of the pure exchange economy is unique, we have that is enough to show that  $\mathbb E[z(\Delta, \lim_{\lambda\rightarrow+\infty} p^\star_s(\lambda))]=0$. Since the map $p\mapsto \mathbb E[z(\Delta,p)]$ is continuous, we have that $\mathbb E[z(\Delta, \lim_{\lambda\rightarrow+\infty} p^\star_s(\lambda))]=\lim_{\lambda\rightarrow+\infty} \mathbb E[z(\Delta, p^\star_s(\lambda))]$. On the other hand, since $\nabla C(\lambda R)/||C(\lambda R)||_1=p^\star$, we have that $z(\Delta,C,\lambda R)\underset{\lambda\rightarrow+\infty}{\longrightarrow} z(\Delta,p^\star)$ uniformly at $\Delta$. So,
\begin{align*}
    \mathbb E[z(\Delta, \lim_{\lambda\rightarrow+\infty} p^\star_s(\lambda))]
    &= \lim_{\lambda\rightarrow+\infty} \mathbb E[z(\Delta, p^\star_s(\lambda))]\\
    &= \lim_{\lambda\rightarrow+\infty} \mathbb E[z(\Delta, C, p^\star_s(\lambda))]\\
    &=0.\square
\end{align*}

\textbf{Proof \ref{prop:localWf}}
Clearly, using that $\text{supp } z(\cdot,p^\star)\subseteq [0,M]^l$ we have that if $C(v)=v\cdot p^\star$ then $\mathbb E[U(\zeta(\Delta,R,C))]=\mathbb E[U(\zeta(\Delta,p^\star))]$. Moreover, the feasible set associated to $C_{p^\star}(v)=v\cdot p^\star$ with reserves $R$ is $E=\{R'\in\mathbb R^l_+:R'\cdot p^\star\geq R\cdot p^\star\}$. Now, We have to prove that  $\underset{s.t. \nabla C(R)=p^\star}{\text{max}_{C\in\text{Co}(\mathbb R^l_+,\mathbb R_+)}}\mathbb E[U(\zeta(\Delta,R,C))]\leq\mathbb E[U(\zeta(\Delta,p^\star))]$. 
If we denote by $S$ the feasible set induced by $C$. we have that $S\subseteq E$. This holds from the fact that $\nabla C(R)=p^\star$ and $C$ is concave. By the upper bound constraints on $\Delta$, we have that $z(\Delta,R,C)$ holds all the optimization problem constraints and therefore we deduce the result. The second part is deduced immediately by taking expectancies.$\square$

\textbf{Proof \ref{prop:equivalence_eq}}: If $\mathcal M$ has a stationary distribution, then, since $U$ is bounded, we deduce it by the large law of numbers, for more details see \cite{hillier1967introduction}. $\square$
%https://eclass.aueb.gr/modules/document/file.php/INF209/%CE%A5%CE%BB%CE%B9%CE%BA%CF%8C%20%CE%B1%CF%80%CF%8C%20%CF%83%CF%85%CE%BD%CE%B1%CF%86%CE%AE%20%CE%BC%CE%B1%CE%B8%CE%AE%CE%BC%CE%B1%CF%84%CE%B1/Ch16_from_Hillier-Lieberman.pdf

\textbf{Proof \ref{theorem:negativewf}}: Take Utility function $U_n(x,y)=\log(x+1/n)+\log(y+1/n)$. Assume endowments hold the distribution $\mathcal D$ defined as $\Pr[(\Delta_1,\Delta_2)=(1,0)]=\Pr[(\Delta_1,\Delta_2)=(0,1)]=1/2$. Let $\pi_n$ be the stationary distribution followed by the Markov process $(R,U_n,C=x+y,\mathcal D)$. Observe that in this CFMMWD, the utility of the player after the trade is bounded by $2\log(1+1/n)$. Therefore, we have that
\begin{equation*}
    Wf(U_n,R,C)\leq (1-\pi_b/2)2\log(1+1/n)+\frac{\pi_b}{2}(\log(1+1/n)+\log(1/n)),
\end{equation*}
where $\pi_b=\text{Probability of reserves being in the boundary}$.
Observe that 
\begin{equation*}
    \lim_{n\rightarrow\infty}  Wf(U_n,R,C)=-\infty
\end{equation*}
Now, we will prove that there exist a $C$ such that the  CFMMWD $G=(R,U_n,C,\mathcal D)$ holds $Wf_n(G)\geq L$ for some $L$. 
Take the constant product market maker $C(x,y)=xy$. Now, for an endowment, $(0,1)$ we can compute the best trade. Assume that the current reserves of the CFMM are $R_x,R_y$. The player will trade a quantity $0\leq b\leq 1$ such that maximizes $G(b)(1-b)$ where $G(b)=\frac{-k}{R_y+b}+R_x$.
Computing it, we have that $b=-R_y+\sqrt{R_y(R_y+1)}$. On the other hand, the $\pi$ distribution of the reserves is uniquely determineted for the $x$ reserves $R_x$. 
One can prove that exist $x_{min}$ and $x_{max}$ such that $\pi(x)\leq 1/2^{1/x}$ for $x\in[0,x_{min}]$ and $1-\pi(x)\leq 1/2^x$ for $x\in[x_{max},+\infty]$. 
Therefore, we deduce that $\int_{-\infty}^\infty(\log(x)+\log(g(x))\pi(x)dx$ exits and is finite, deducing the result. $\square$

\textbf{Proof \ref{prop:welfare_plane}}:
The proof is deduced from the fact that the probability of being in the border tends to zero as $\lambda$ tends to infinity and the proposition \ref{prop:equivalence_eq}.$\square$

\textbf{Proof Sketch \ref{prop:equiv_price}}:
The first equality is deduced similar to \ref{prop:equivalence_eq}, the second one by definition.

\textbf{Proof \ref{theorem:price_convergence}}:

\begin{lemma} Let $G^\lambda=(\lambda R,C,U,\mathcal D)$ be a sequence of CFMM with Walrasian demand with $C(x)=p\cdot x$ with $p\not=p^\star$. Assume that the support of the endowments is bounded. Then:
\begin{enumerate}
    \item $G^\lambda$ is Ergodic for all $\lambda\geq0$, and so, we can consider the stationary distribution $\pi^\lambda$.
    \item Let $\mathcal S$ be the state space of $G^1$ and $A\subseteq \text{int}(S)$ closed set, then $\lim_{\lambda\rightarrow+\infty}\pi^\lambda\left[x\in \lambda A\right]=0$.
\end{enumerate}
\end{lemma}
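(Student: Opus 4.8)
The whole lemma turns on one structural feature of a constant-sum market maker: the curve $C(x)=p\cdot x$ posts the \emph{fixed} price $p$ regardless of the reserves, so every interior trade is budget-balanced at $p$ and moves the reserves by the increment $R_{t+1}-R_t=\Delta_t-\zeta(\Delta_t,p)=-z(\Delta_t,p)$. Averaging, the mean interior increment is $-d$ where $d:=\mathbb E_{\Delta\sim\mathcal D}[z(\Delta,p)]$. Because $p^\star$ is, by the uniqueness-of-equilibrium hypothesis, the only price killing the expected excess demand, $p\neq p^\star$ forces $d\neq0$; and since each $z(\Delta,p)$ is budget-balanced we also have $p\cdot d=0$, so $-d$ is tangent to the price hyperplane carrying the reserves. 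The plan is therefore to regard $\{R_t\}$ as a Markov chain on the compact simplex $\mathcal S_\lambda:=\lambda\mathcal S=\{R\ge0:\,p\cdot R=\lambda\,p\cdot R_0\}$ with a \emph{constant nonzero interior drift} $-d$, and to run a Foster--Lyapunov/exponential-concentration argument showing that the stationary law piles up against the boundary face toward which $-d$ points (the ``sink''). Any fixed interior set $\lambda A$ sits at distance $\Theta(\lambda)$ from that face, so its stationary mass decays exponentially in $\lambda$.

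For part (1) I would argue existence of $\pi^\lambda$ by Krylov--Bogolyubov: the state space is compact and, by Proposition \ref{prop:smooth}, $\zeta(\cdot,\cdot)$ is continuous, so the transition kernel is Feller. Uniqueness and ergodicity then follow from irreducibility driven by the drift—from any starting reserve the chain is pushed into a fixed neighborhood of the sink face, which forms a single communicating (aperiodic) class—yielding a minorization/Harris-recurrence condition and hence a unique stationary distribution together with the ergodic averages used elsewhere. (When $\mathcal D$ is discrete this is irreducibility of the induced lattice walk rather than of a density; the argument is unchanged.)

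For part (2) take the linear Lyapunov function $h(R):=d\cdot R$. Since $\mathcal D$ has bounded support and $p>0$, the optimal bundle $\zeta(\Delta,p)$ and hence $z(\Delta,p)$ are bounded by some $M'$, so the increments are bounded, $|h(R_{t+1})-h(R_t)|\le\|d\|\,M'=:B$, uniformly in $\lambda$. In the deep interior $\{R:\,R_i>M'\ \forall i\}$ the unconstrained trade is feasible and
\[
\mathbb E\!\left[h(R_{t+1})-h(R_t)\mid R_t\right]\;=\;d\cdot(-d)\;=\;-\|d\|^{2}\;<\;0 .
\]
Because $h$ is linear on the polytope $\mathcal S_\lambda$ its minimum is attained at the sink vertex, and as $A$ is a compact subset of $\operatorname{int}(\mathcal S)$ we have $c:=\min_A h-\min_{\mathcal S}h>0$; consequently $\lambda A\subseteq\{R:\,h(R)\ge\min_{\mathcal S_\lambda}h+\lambda c\}$. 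Feeding the negative drift and the bound $B$ into an exponential supermartingale $e^{\eta(h-\min)}$ for small $\eta>0$ gives a stationary tail $\pi^\lambda[\,h\ge\min_{\mathcal S_\lambda}h+s\,]\le Ce^{-\eta s}$ with $C,\eta$ depending only on $\|d\|$ and $B$—hence independent of $\lambda$—so that $\pi^\lambda[\lambda A]\le Ce^{-\eta\lambda c}\to0$.

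The main obstacle is that the exponential estimate requires the drift inequality to hold on the \emph{entire} superlevel set $\{h\ge\min+\tau_0\}$, not merely in the deep interior: near the non-sink boundary faces the feasibility constraint $\Lambda\le R+\Delta$ truncates a trader's purchase of any asset the market maker has run out of, reallocating the optimal bundle in a way that could a priori flip the sign of $\mathbb E[\Delta h]$. The crux is to show these truncations always push the reserves \emph{away} from the depleted face (one cannot be forced to sell an asset that is not in the pool, only to replenish it), so that $\mathbb E[\Delta h]\le-\delta<0$ persists off a bounded neighborhood of the sink; I would establish this from monotone comparative statics of the truncated trade-choice problem. Securing $\lambda$-free constants in the tail bound is the remaining point needing care, but it is automatic once the drift and increment bounds are themselves $\lambda$-free, which they are since they are governed only by $d$ and the (fixed) support of $\mathcal D$.
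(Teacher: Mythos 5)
Your proposal takes essentially the same route as the paper: part (1) is the identical compactness-plus-Feller-plus-Krylov--Bogolyubov argument, and part (2) rests on the same picture the paper invokes, namely that for $C(x)=p\cdot x$ the reserves perform a random walk with constant nonzero drift $-\mathbb E_{\Delta\sim\mathcal D}[z(\Delta,p)]$ along the hyperplane $\{R: p\cdot R=\mathrm{const}\}$, so the stationary law concentrates near the boundary and any set $\lambda A$ with $A\subseteq\operatorname{int}(\mathcal S)$ closed loses all mass as $\lambda\to\infty$. Since the paper's own proof is only a two-line sketch of this idea, your explicit Lyapunov/supermartingale tail bound with $\lambda$-free constants, your Harris-recurrence step for uniqueness (Krylov--Bogolyubov alone gives existence of an invariant measure, not ergodicity), and your flagging of the unresolved drift estimate near the non-sink boundary faces all make explicit precisely the details the paper leaves unaddressed.
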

\begin{proof}
\begin{enumerate}
    \item  Since $\mathcal S^\lambda$ is compact, we have that every collection of measures are tight. Moreover, the Markov operator associated to the Markov process is Feller. Therefore, the statement is deduced by the Krylov-Bogolioubov theorem.
    \item This is equivalent to proof that positive random walks with negative drift have a stationary distribution. Since, we can approximate the Markov chain by sufficiently closer Markov chains. 
\end{enumerate}
The theorem then is deduced from approximating the CFMM by piece-wise linear CFMM and proving that the probability that the point is in the piece with price $p\not=p^\star$ is zero.
\end{proof}

\textbf{Proof \ref{prop:LP_loss}}: Let $R$ be the initial reserves of the CFMM. Take $c$ such that $C(R)\not=c$. Let $R'$ be the reserves after the trade $x_2(c,R)$. Taking Lagrangian, we can show that $\nabla C(R')=c$. Since $C$ is convex, we have that
\begin{equation*}
    C(R)\leq C(R')+\nabla C(R')\cdot(R-R')
\end{equation*}
Since $C(R')=C(R)$, we have that $c R\geq c R'$. Therefore, we deduce that there exist $\lambda\geq1$ such that $\lambda x_2\in\{x:xc=x_1c\}$. Now we will proof that $\lambda>1$. Assume that $\lambda=1$, then we would have that $cR=cR'$. Since $C$ is concave, we have that the set $\mathcal S=\{R:C(R)\geq C(R)\}$ is convex and so the line $\{x:xc=x_2c\}$ is tangent. This implies that $\nabla C(R)=c$, leading to a  contradiction, therefore $\lambda>1$. Since $U$ is strictly increasing, we have that $U(\lambda x_2(c,R))> U(x_2(c,R))$. On the other hand, $\lambda x_2$ holds the constraints of the first optimization problem, therefore, $U(x_1(c,R))> U(\lambda x_2(c,R))$. $\square$

\begin{comment}
\textbf{Proof \ref{theorem:saddle}}: Recall that $\zeta(\Delta,C,R)$ is defined as the optimal solution of the convex optimization problem
\begin{align}
& \underset{\Lambda}{\text{maximize}} & & U(\Lambda) \\
& \text{subject to} & & C(R+\Delta-\Lambda)\geq C(R),\Lambda\geq0.
\end{align}
So, $\zeta:=\zeta(\Delta,C,R)$ is a critical point of the Lagrangian
\begin{equation*}
    \mathcal L(\Lambda,\lambda)= U(\Delta')-\lambda(C(R+\Delta-\Lambda)-C(R))
\end{equation*}
That is, $[D_\Delta \mathcal L(\zeta,\lambda)\quad D_\lambda \mathcal L(\zeta,\lambda)]=0$. Therefore, we have that
\begin{align}
    \frac{\partial U}{\partial x_i}(\zeta) &= \lambda \frac{\partial C}{\partial x_i}(R-\zeta+\Delta)\\
    C(R-\zeta+\Delta)&=C(R).\label{eq:C}
\end{align}
If we derivate \ref{eq:C} by $R_i$ the equation, we have that 
\begin{equation}
    \frac{\partial C}{\partial x_i}(R-\zeta+\Delta)-\sum_{k=1}^l\frac{\partial C}{\partial x_k}(R-\zeta+\Delta)\frac{\partial \zeta_k}{\partial R_i}=\frac{\partial C}{\partial x_i}(R)\text{ for }i=1,...,l
\end{equation}
Now,
\begin{align*}  
    \frac{\partial}{\partial R_i}\mathbb E[U(\zeta)]&=\mathbb E[\frac{\partial}{\partial R_i}(U(\zeta))]\\
                                                    &=\mathbb E\left[\sum_{k=1}^l\frac{\partial U}{\partial x_k}\frac{\partial\zeta_k}{\partial R_i}\right]\\
                                                    &=\mathbb E[\lambda(R,C)\sum_{k=1}^l]
\end{align*}
\end{comment}
\end{document}